\newcommand{\strat}{\mathbf{s}}
\newcommand{\SSigma}{S}
\newcommand{\ssigma}{s}
\newcommand{\G}{\mathcal{G}}
\newcommand{\NSD}{\mathit{NSD}}
\newcommand{\EU}{\mathit{EU}}
\newcommand{\supp}{\text{supp}}
\begin{document}


\newtheorem{theorem}{Theorem}[section]
\newtheorem{definition}[theorem]{Definition}
\newtheorem{fact}[theorem]{Fact}
\newtheorem{axiom}[theorem]{Axiom}
\newtheorem{example}[theorem]{Example}
\newtheorem{proposition}[theorem]{Proposition}
\newtheorem{remark}[theorem]{Remark}
\newtheorem{corollary}[theorem]{Corollary}
\newtheorem{lemma}[theorem]{Lemma}
\newtheorem{problem}[theorem]{Problem}
\newtheorem{note}[theorem]{Note}
\newtheorem{conjecture}[theorem]{Conjecture}
\newtheorem{assumption}[theorem]{Assumption}
\newtheorem{question}[theorem]{Question}
\newtheorem{property}[theorem]{Property}
\newcommand{\commentout}[1]{}
\newcommand{\PR}{\mathcal{PR}}
\newcommand{\wbox}{\hfill \mbox{$\sqcap$\llap{$\sqcup$}}}




\title{Translucent Players: Explaining 
Cooperative Behavior in Social Dilemmas}


\author{Valerio Capraro\\
Centre for Mathematics and Computer Science (CWI)\\
 Amsterdam, 1098 XG, The Netherlands\\
V.Capraro@cwi.nl
\and
Joseph Y. Halpern\\
   Cornell University\\
   Computer Science Department\\
   Ithaca, NY 14853\\
   halpern@cs.cornell.edu}
\maketitle

\begin{abstract}
In the last few decades, numerous experiments have shown that humans do not always behave so as to maximize their material payoff. Cooperative behavior when non-cooperation is a dominant strategy (with respect to the material payoffs) is particularly puzzling. Here we propose a novel approach to explain cooperation, assuming what Halpern and Pass \citeyear{HaPa13} call
\emph{translucent players}. Typically, players are assumed to be \emph{opaque}, in the sense
that a deviation by one player does not affect the strategies used by other players. But a player may believe that if he switches from one strategy to another, the fact that he chooses to switch may be visible to the other players.  For example, if he chooses to defect in Prisoner's Dilemma, the other player may sense his guilt.  We show that by assuming translucent players, we can recover many of the regularities observed in human behavior in well-studied games such as Prisoner's Dilemma, Traveler's Dilemma, Bertrand Competition, and the Public Goods game.
\end{abstract}






\section{Introduction}\label{sec:intro}

In the last few decades, numerous experiments have shown that humans do not
always behave so as to maximize their material 
payoff.  Many alternative models have consequently been proposed to
explain  deviations from the money-maximization paradigm. 
Some of them assume that players are boundedly rational 
and/or make mistakes in the computation of the expected utility of a strategy
\cite{CHC04,CG-Cr-Br01,HP08,MK-Pa95,St-Wi94}; yet others assume 
that players have other-regarding preferences \cite{Bo-Oc,Ch-Ra,Fe-Sc};
others define radically different solution concepts, assuming that
players do not try to maximize their payoff, but rather try to
minimize their regret \cite{HP11b,RS08}, or maximize the forecasts
associated to coalition structures \cite{Ca,CVPJ}, or maximize the total
welfare \cite{AptSchafer14,RH}.  (These references only scratch the surface;
a complete bibliography would be longer than this paper!)

\commentout{
 one more model of human behavior, taking its origin
from the influential Nature paper by Rand, Green, and Nowak
\cite{RGN}. In this paper and its subsequent works
\cite{R13a,R13b,R13c}, Rand and colleagues have provided evidence that
behavior of experimental subjects in the lab is strongly influenced by
their experience in everyday interactions. The Social Heuristics
Hypothesis affirms that people internalize strategies that are more
successful in everyday interactions and use them as default strategies
in the lab.  

In this light, it becomes important to formalise the kind of reasoning
people are likely to follow in a real-life scenario and try to develop a
different equilibrium theory which hopefully better explains the
experimental findings that have been collected in the last decades.  
}


Cooperative behaviour in one-shot anonymous games is particularly
puzzling, especially in games where 
non-cooperation is a dominant strategy (with
respect to the material payoffs): why should you pay a cost to help
a stranger, when no clear direct or indirect reward seems to be at stake?
Nevertheless, 
the secret of success of our societies is largely due to our ability
to cooperate. We do not cooperate only with family members, friends,
and co-workers. A great deal of cooperation can be observed also in
one-shot anonymous interactions \cite{Camerer03}, where none of the
five rules of cooperation proposed by Nowak \citeyear{Nowak06} seems
to be at play.   

Here we propose a novel approach to explain cooperation, based on work 
of Halpern and Pass \citeyear{HaPa13} 
and Salcedo \citeyear{Salcedo}, assuming what Halpern and Pass call
\emph{translucent players}.   Typically,
players are assumed to be \emph{opaque}, in the sense
that a deviation by one player does not affect the strategies used by
other players. But a player may believe that if he switches from one
strategy to another, the fact that he chooses to switch may be visible to
the other players.  For example, if he chooses to defect in 
Prisoner's Dilemma, the other player may sense his guilt.  (Indeed, it is well
known that there are facial and bodily clues, such as increased pupil size,
associated with deception; see, e.g., \cite{EkmanFriesen69}.
Professional poker players are also very sensitive to
\emph{tells}---betting patterns and physical demeanor that reveal
something about a player's hand and strategy.)

We use the idea of translucency to explain cooperation.  This
may at first seem somewhat strange. Typical lab experiments of social
dilemmas consider anonymous players, who play each other over computers.
In this setting, there are no tells.  
%
However, as Rand and his colleagues have argued (see, e.g.,
\shortcite{RGN,R13a}),  
behavior of subjects in lab experiments is strongly influenced by
their experience in everyday interactions.  People  internalize
strategies that are more successful in everyday interactions and use
them as default strategies in the lab.   We would 
argue that people do not just internalize 
strategies; they also internalize \emph{beliefs}.  In everyday
interactions, changing strategies certainly affects how other players react
in the future.  Through tells and leaks, it also may affect how other
players react in current play.  Thus, we would argue that in everyday
interactions, people assume a certain amount of translucency, both
because it is a way of taking the future into account in real-world
situations that are repeated and because it is a realistic assumption in
one-shot games that are played in settings where players have a great
deal of social interaction.  We claim that players then apply these
beliefs in lab settings where they are arguably inappropriate.

There is some additional experimental evidence that can be viewed as
supporting translucency.  There is growing evidence that showing
people simple images of watching eyes has a marked effect on behavior,
ranging from giving more in Public Goods games to littering less (see
\shortcite{BCHRN13} for a discussion of some of this work and an extensive
list of references).  
One way of understanding these results is that the eyes are making
people feel more translucent.

\commentout{
Halpern and Pass formalize translucency using what they call
\emph{counterfactual structures}.  In a counterfactual structure, each
player $i$ has beliefs about what other players would play if $i$ were
to deviate.  The key point is that, unlike standard frameworks, these
beliefs may change depending on how $i$ deviates.    Like standard
model, counterfactual structures involve a state space;  associated with
each state $\omega$ is the strategy profile $\strat(\omega)$ played at
$\omega$.  Players' beliefs regarding the effect of deviations is
captured using a function $f$ that associates with each state $\omega$,
player $i$, and strategy $\ssigma$ for player $i$ the state
$f(\omega,i,\ssigma')$ that  would result if $i$ deviated from the
strategy $\ssigma$ that he actually uses in state $\omega$ to $\ssigma'$.
(If $\ssigma = \ssigma'$, then $f(\omega,i,\ssigma) = \omega$.)  Each
player $i$ is assumed to have a belief about what other players are
doing (captured by a distribution over states).  The function $f$ induces a

distribution over the states that would result in $i$ switched to
$\ssigma'$, which in turn results in $i$ having beliefs about what the
other players would do if he were to switch to $\ssigma'$.

Halpern and Pass \citeyear{Ha-Pa13} define a solution concept in
counterfactual structures that they call \emph{iterated minimax
domination}, which can be viewed as a generalization of
rationalizability \cite{Ber84,Pearce84}.%
\footnote{Iterated minimax deletion was also defined independently by
Capraro \citeyear{Capraro13} and Salcedo \citeyear{Salcedo}.}
Here we define another
solution concept in counterfactual structures that we call
\emph{translucent equilibrium}; it is arguably more in the 

spirit of Nash equilibrium.  Not surprisingly,
every Nash equilibrium is a translucent equilibrium in a model where
each player $i$ believes that other players' strategies do not change if he
deviates (so that if $\omega' = f(\omega,i,\ssigma')$, then for all
players $j \ne i$ use the same strategy in $\omega$ and $\omega'$).
}

\commentout{
A major difference between idealised strategic situations and real-life
strategic situations is \emph{opacity}. The classical Nash equilibrium
notion is stable under unilateral deviations, which means essentially
that the reasoning of one agent cannot influence the reasoning of the
others: if player $i$ reasons about changing his strategy from $s_i$ to
$s_i'$ in strategy profile $(s_i,s_{-i})$, he assumes that the other
players do not change their strategy and keep staying in $s_{-i}$. On
the other hand, real-life interactions are shaped in a context where
agents are in contact with each other and they can consequently
influence each other by several ways of communication: if player $i$
reasons about changing his strategy from $s_i$ to $s_i'$, the other
players may \emph{see} this reasoning and change their strategy as
well. In other words, players are translucent. 

Using counterfactual structures \cite{St68,Le73}, Halpern and Pass
\cite{Ha-Pa13} made the first step towards this direction and defined
games with translucent players. For these games they defined a notion of
rationality and characterized rational strategies. 
}

\commentout{
Translucent equilibria are possibly mixed strateg profiles; by way of
contrast, when considering the strategies that survive iterated minimax
deletion, the focus is on pure strategies, 
 on strategies of individuals players, rather than strategy profiles.
We take the probabilities in mixed equilibria to characterize the
proportion of subjects that will play a given strategy.  We can then
compare these predicted proportions to what is observed in the lab.
Another key feature of translucent equilibria is that they are defined
relative to a counterfactual structure.  
Counterfactual structures are very flexible.  We  can capture many
situations, depending on how the closest-state function $f$ is defined.
However, this very flexibility takes away from the predictive power of
counterfactual structures and translucent equilibrium.  As a first step
to recovering some predictive power, we focus on
}

We apply the idea of translucency to 
a particular class of games that we call \emph{social
dilemmas} (cf.~\cite{Dawes80}).
  A social dilemma is a normal-form game with two properties:
\begin{enumerate}
\item there is a unique Nash equilibrium $\ssigma^N$, which is a pure
strategy profile;
\item there is a unique welfare-maximizing profile $\ssigma^W$,
again a pure strategy profile, 
such that each player's utility if $\ssigma^W$ is played is higher
than his utility if $\ssigma^N$ is played.
\end{enumerate}
Although social dilemmas are clearly a restricted class of games,
they contain some of the best-studied games in the game theory
literature, including Prisoner's Dilemma, Traveler's Dilemma
\cite{Basu94}, Bertrand Competition, and the Public Goods game.
(See Section~\ref{se:social dilemmas model} for more discussion of these games.)

There are (at least) two reasons why an agent may be concerned about
translucency in a social dilemma: (1) his opponents may discover
that he is planning to defect and punish him by defecting as well, (2)
many other people in his social group (which may or may not include
his opponent) may discover that he 
is planning to defect (or has defected, despite the fact that the game
is anonymous) and think worse of him.  

 
For definiteness, we focus here on the first point and assume that, in social
dilemmas, players have a degree of belief 
$\alpha$ that they are translucent, so that if they intend to
cooperate (by playing their component of the welfare-maximizing
strategy) and decide to deviate, there is a probability $\alpha$
that another player will detect this, and play her component of
the Nash equilibrium strategy.  (These detections are independent, so
that the probability of, for example, exactly two players 
other than $i$ detecting a deviation by $i$ is 
$\alpha^2(1-\alpha)^{N-3}$, where $N$ is the total number
of players.)
Of course, if $\alpha = 0$, then we are back at the standard
game-theoretic framework.  We show that, with this assumption, we can
already explain
a number of 
experimental regularities observed in social dilemmas (see 
Section~\ref{se:social dilemmas model}).
We can model the second point regarding concerns about translucency
in much the same way, and would get
qualitatively similar results (see Section~\ref{sec:discussion}).

The rest  of the paper is as follows. 
In Section~\ref{se:definitions}, we formalize the notion of
translucency in a game-theoretic setting.
In Section \ref{se:social dilemmas model}, we define the social
dilemmas that we focus on in this paper; in 
Section \ref{sec:explanation}, we show that
by assuming translucency, we can obtain as predictions of the
framework a number of regularities that have been observed in the
experimental literature. 
In Section~\ref{sec:discussion}, we show that most of the other
approaches proposed for  
explaining human behavior in social dilemmas do not predict all 
these regularities.

In the appendix, we discuss a solution concept that we call
\emph{translucent equilibrium}, based on translucency, closely related
to the notion of \emph{individual rationality} discussed by Halpern and
Pass \citeyear{HaPa13}, and show how it can be applied in social dilemmas.

\section{Rationality with translucent players}\label{se:definitions}

\commentout{
In this section, we briefly review the definition of counterfactual
structures.  The reader is encouraged to consult \cite{Ha-Pa13} for more
detail and intuition.}
In this section, we briefly define rationality in the presence of
translucency, motivated by the ideas in Halpern and Pass \citeyear{HaPa13}.

Formally, a (finite) normal-form  game $\mathcal G$ is a tuple
$(P,\SSigma_1,\ldots,\SSigma_N,u_1,\ldots,u_N)$,
where $P=\{1,\ldots,N\}$ is the set of players, $\SSigma_i$
is the set of  strategies for player $i$,  and $u_i$ is player $i$'s
utility function.    Let $\SSigma = \SSigma_1 \times \cdots \times
\SSigma_N$ and $\SSigma_{-i} = \prod_{j \ne
  i} \SSigma_j$.    We assume that $\SSigma$ is finite
and that $N \ge 2$.  

In standard game theory, it is assumed that a player $i$ has beliefs about 
the strategies being used by other players; $i$ is rational if his
strategy is a best response to these beliefs.  The standard definition
of best response is the following.

\begin{definition}\label{brstandard}
A strategy $s_i \in \SSigma_i$ is a best
response to a probability $\mu_i$ on $\SSigma_{-i}$ if, for all strategies $s'_i$ for player $i$, we have $$\sum_{s'_{-i} \in \SSigma_{-i}} \mu_i(s_{-i}') u_i(s_i,s'_{-i}) \ge 
\sum_{s'_{-i} \in \SSigma_{-i}} \mu_i(s'_{-i}) u_i(s'_i,s'_{-i}).$$
 \end{definition}
Definition~\ref{brstandard} implicitly  assumes that $i$'s beliefs about
what other agents are doing do not change if $i$ switches from $s_i$,
the strategy he was intending to play,
to a different strategy.  (In general, we assume that $i$ always has 
an \emph{intended strategy}, for otherwise it does not make sense
to talk about $i$ switching to a different strategy.)  
So what we really have are beliefs $\mu_i^{s_i,s_i'}$ for $i$ indexed by a pair
of strategies $s_i$ 
and $s_i'$; we interpret $\mu_i^{s_i,s_i'}$ as $i$'s beliefs if he intends
to play $s_i$ but instead deviates to $s_i'$.  Thus, $\mu_i^{s_i,s_i}$
represents $i$'s beliefs if he plays $s_i$ and does not deviate.  

We can now define a best response for $i$ with respect to a family of
beliefs  $\mu_i^{s_i,s_i'}$.  

\begin{definition}\label{brdefinitionnew}  Strategy $s_i \in \SSigma_i$ is a \emph{best response
    for $i$ with respect to the beliefs $\{\mu_i^{s_i,s_i'}: s_i' \in
    \SSigma_i\}$} if, for all strategies $s_i' \in \SSigma_i$, we have 
$$\sum_{s'_{-i} \in \SSigma_{-i}} \mu_i^{s_i,s_i}(s_{-i}') u_i(s_i,s'_{-i}) \ge 
\sum_{s'_{-i} \in \SSigma_{-i}} \mu_i^{s_i,s_i'}(s'_{-i}) u_i(s'_i,s'_{-i}).$$

\end{definition}

We are interested in players
who are making best responses to their beliefs, but we define best
response in terms of Definition~\ref{brdefinitionnew}, not
Definition~\ref{brstandard}.  
Of course, the standard notion of best response is
just the special case of the notion above where 
$\mu_i^{s_i,s_i'} = \mu_i^{s_i,s_i}$ for all $s_i'$: a player's beliefs about what other
players are doing does not change if he switches strategies.

\begin{definition}
We say that a player is \emph{translucently rational} if he best
responds to his beliefs in the sense of Definition
\ref{brdefinitionnew}. 
\end{definition}

Our assumptions about translucency will be used to determine 
$\mu_i^{s_i,s_i'}$.  For example, suppose that $\Gamma$ is a 2-player
game, player $1$ believes
that, if he were to switch from $s_i$ to $s_i'$, this would be detected by
player 2 with probability $\alpha$, and if 
player $2$ did detect the switch, then player $2$ would switch to 
$s_j'$.  Then $\mu_i^{s_i,s_i'}$ is $(1-\alpha)\mu^{s_i,s_i} +
\alpha\mu'$, where $\mu'$ assigns probability 1 to
$s_j'$; that is, player 1 believes that with probability $1-\alpha$,
player 2 continues to do what he would have done all along (as
described by $\mu^{s_i,s_i}$) and, with probability $\alpha$, player 2
switches to $s_j'$.

\commentout{
\begin{definition}\label{defin:counterfactual structure}
{\rm A finite counterfactual structure appropriate for the game
$\mathcal G$ is a tuple $M = (\Omega, \mathbf{s}, f, \PR_1,\ldots,\PR_N)$,
where: 
\begin{itemize}
\item $\Omega$ is a finite space of states;
\item $\mathbf{s}:\Omega\to \SSigma$
\item For each $\omega\in\Omega$, $\mathcal P\mathcal R_i(\omega)$ is a
probability measure on $\Omega$ satisfy the following properties:
\begin{itemize}
\item[PR1.] $\PR_i(\omega)(\{\omega'\in\Omega :
\mathbf{s}_i(\omega')=\mathbf{s}_i(\omega)\})=1$ (where
$\strat_i(\omega)$ denotes player $i$'s strategy in $\strat(\omega)$;
\item[PR2.] $\PR_i(\omega)(\{\omega'\in\Omega :
\PR_i(\omega')=\PR_i(\omega)\})=1$.
\end{itemize}
These assumptions guarantee that player $i$ assigns probability $1$ to
his actual strategy and beliefs. 
\item $f$ is the closest-state function: given $\omega\in\Omega$, $i\in
P$, $\ssigma'\in \SSigma_i$, $f(\omega,i,\ssigma')\in\Omega$ represents what would
happen if player $i$ switched strategy to $s_i'$ at state $\omega$. The
closest-state function $f$ is required to satisfy the following
properties:
\begin{itemize}
\item[CS1.] $\mathbf{s}_i(f(\omega,i,\ssigma'))=\ssigma_i'$ (so that at
$f(\omega,i,\ssigma_i')$, player $i$ plays $\ssigma_i'$);
\item[CS2.] $f(\omega,i,\mathbf{s}_i(\omega))=\omega$.   \wbox
\end{itemize}
\end{itemize}
}
\end{definition}


$\PR_i(\omega)$ defines $i$'s beliefs at $\omega$.  We now define
$\PR_{i,\ssigma'}(\omega)$: $i$'s beliefs at $\omega$ if he were to
switch to strategy $\ssigma'$ by taking
$$\mathcal P\mathcal R_{i,\ssigma'}(\omega)(\omega')=\sum_{\{\omega'' :
f(\omega'',i,\ssigma')=\omega'\}}\mathcal P\mathcal R_i(\omega)(\omega''). 
$$
Intuitively, if he were to switch from to strategy $\ssigma'$ at
$\omega$, the probability that $i$ would assign to state $\omega'$ is
the sum of the probabilities that he assigns to all the states
$\omega''$ such that he beliefs that he would move from $\omega''$ to
$\omega'$ if he used strategy $\ssigma'$.  This is exactly sum of 
$\Pr(\omega)(\omega'')$ for all $\omega''$ such that
$f(\omega'',i,\ssigma') = \omega'$.

We define the expected utility of player $i$ at state $\omega$ in the
usual way as the sum of the product of his expected utility of the
strategy profile played at each state $\omega'$ and the
probability of $\omega'$:
$\EU_i(\omega)=\sum_{\omega'\in\Omega}\mathcal P\mathcal
R_i(\omega)(\omega')u_i(\mathbf{s}_i(\omega),\mathbf{s}_{-i}(\omega')).$%
\protect{\footnote{Given a profile $t = (t_1, \ldots, t_N)$, as usual, we define
$t_{-i}=(t_1,\ldots,t_{i-1},t_{i+1},\ldots,t_N)$.  We extend this
notation in the obvious way to functions like $\strat$, so that, for
example, 
$\strat_{-i}(\omega) = (\strat_1(\omega), \ldots, \strat_{i-1}(\omega),
\strat_{i+1}(\omega),\ldots, \strat_{n}(\omega))$.}

The usual way of defining $i$'s expected utility at $\omega$ if he were
to switch to $\ssigma'$ is to simply replace his actual strategy at
$\omega$ by $\ssigma'$ at all states, keeping the strategies of the other
players the same; that is,
$$\sum_{\omega'\in\Omega}\PR_i(\omega)(\omega')u_i(\ssigma',\mathbf{s}_{-i}(\omega')).$$
In this definition,  player $i$'s beliefs about the strategies that the
other players are using does not change when he switches from
$\strat_i(\omega)$ to $\ssigma'$.  The key point of counterfactual
structures is that this belief may well change. Thus, we define $i$'s
expected utility at $\omega$ if he switches to $\ssigma'$ as
$$
\EU_i(\omega,\ssigma')=\sum_{\omega'\in\Omega}\mathcal
\PR_{i,\ssigma'}(\omega)(\omega')u_i(\ssigma',\mathbf{s}_{-i}(\omega')).  
$$

Finally, we can define rationality in counterfactual structures using
these notions: 
\begin{definition}\label{def:rationality}
{\rm Player $i$ is \emph{rational} at state $\omega$ if, for all
$\ssigma' \in \SSigma_i$,
$$\EU_i(\omega) \ge \EU_i(\omega,\ssigma').$$
\wbox
}
\end{definition}

\section{Translucent equilibrium}\label{se:equilibria}
In this section, we define translucent equilibrium.  We start with some
preliminary definitions.

Given a probability measure $\tau$ on a finite set $T$, let
$\text{supp}(\tau)$ denote the support of $\tau$, that is,  
$$
\text{supp}(\tau)=\{t\in T : \tau(t)\neq0\}.
$$

Note that $\sigma_{-i}$ can 
can be viewed as a probability on
$S_{-i}$, where
$\sigma_{-i}(s_{-i})=\prod_{j\neq i}\sigma_{j}(s_j)$.
Similarly $\sigma$ can be viewed as a probability measure on $S$.
In the sequel, we view $\sigma_{-i}$ and $\sigma$ as probability
measures without further comment (and so talk about their support).

\begin{definition}\label{def:equilibrium}
{\rm A strategy profile $\sigma$ in a game $\G$ is 
\emph{translucent equilibrium} in a counterfactual structure $M =
(\Omega, \mathbf{s}, f, \PR_1,\ldots,\PR_N)$ 
appropriate for $\G$ if there exists a subset $\Omega' \subseteq \Omega$
such that, for each state $\omega$ in $\Omega'$, the following
properties hold:
\begin{enumerate}
\item[TE1.] $\mathbf{s}(\omega)\in\text{supp}(\sigma)$.
\item[TE2.] $\text{supp}(\PR_i(\omega))\subseteq\Omega'$.
\item[TE3.] $\mathbf{s}_{-i}(\PR_i(\omega))=\sigma_{-i}$ (i.e.,
for each strategy profile $s_{-i} \in S_{-i}$, we have 
$\sigma_{-i}(s_{-i}) = \PR_i(\omega)(\{\omega': \strat_{-i}(\omega') =
s_{-i}\})$). 
\item[TE4.] Each player is rational at $\omega$.
\end{enumerate}
The mixed strategy profile $\sigma$ is a translucent equilibrium of $\G$
if there exists a counterfactual structure $M$ appropriate for $\G$ such
that $\sigma$ is a translucent equilibrium in $M$.}  \wbox
\end{definition}
Intuitively, $\sigma$ is a translucent equilibrium in $M$ if, for each
strategy $s_i$ in the support of $\sigma_i$, the expected utility of
playing $s_i$ given that other players are playing according
$\sigma_{-i}$ is at least as good as switching to some other strategy
$s_i'$, given what $i$ would believe about what strategies the other
players are playing if he were to switch to $s_i'$.  

It is easy to see that every Nash equilibrium of $\G$ is a translucent
equilibrium.  

\begin{proposition} Every Nash equilibrium of $\G$ is a translucent
equilibrium.
\end{proposition}
\begin{proof} 
Given a Nash equilibrium 
$\sigma=(\sigma_1,\ldots,\sigma_n)$, consider the following
counterfactual structure $M_\sigma = (\Omega, \mathbf{s}, f,
\PR_1,\ldots,\PR_N)$:
\begin{itemize}

\item $\Omega$ is the set of strategy profiles in the support of
$\sigma$;
\item $\strat(s) = s$;
\item $\PR_i(s_i,s_{-i})(s_i',s_{-i}') = 
\left\{
\begin{array}{ll}
0 &\mbox{if $s_i' \ne s_i$}\\
\sigma_{-i}(s_{-i}') &\mbox{if $s_i' = s_i$;}
\end{array}
\right.$
\item $f((s_i,s_{-i}),i,s')=(s',s_{-i})$. 
\end{itemize}

It is easy to check that $\sigma$ is a translucent equilibrium in
$M_\sigma$; we simply take $\Omega' = \Omega$.  The fact that $f$ is an
``opaque'' closest-state function, which is not affected by the strategy
used by players, means that rationality in $M$ reduces to the standard
definition of rationality.  We leave details to the reader.  
\end{proof}

Although the fact that we can consider arbitrary counterfactual
structures (appropriate for $\G$) means that many strategy profiles are
translucent equilibria, the notion of translucent equilibrium has some
bite.  For example, the strategy profile $(C,D)$, where player 1
cooperates and player 2 defects, is not a translucent
equilibrium in Prisoner's Dilemma.  If player 1 believes that player 2
is playing defecting with probability 1, there are no beliefs that 1
could have that would justify cooperation.  However, as we shall see,
both $(C,C)$ and $(D,D)$ are translucent equilibria.  This follows from
the characterization of translucent equilibrium that we now give.

\begin{definition}\label{dfn:coherent}
{\rm Given a game $\G = (P,S_1,\ldots, S_N, u_1, \ldots, u_N)$, 
a mixed-strategy profile $\sigma$ in $\G$ is \emph{coherent} if
for all players $i \in P$, all $s_i\in\text{supp}(\sigma_i)$, and all
$s_i'\in S_i$, there is $s_{-i}'\in S_{-i}$ such that 
$$u_i(s_i,\sigma_{-i})\geq u_i(s')$$
(where, of course, $u_i(s_i,\sigma_{-i}) = \sum_{s_{-i}'' \in S_{-i}'}
\sigma_{-i}(s_{-i}'') u_i(s_i, s_{-i}'')$).
} \wbox
\end{definition}
That is, $\sigma$ is coherent if, for all pure strategies for player $i$
in the support of $\sigma_i$, if $i$'s belief about the strategies being
played by the other  players is given by $\sigma_{-i}$, 
strategies is given by there is no obviously better strategy that
$i$ can switch to in the weak sense that, if $i$ contemplates switching
to $s_i'$, there are beliefs that $i$ could have about the other players
(namely, that they would definitely play $s_{-i}'$ in this case) that
would make switching not make to $s_i'$ better than sticking with $s_i$.

It is easy to see that $(C,C)$ and $(D,D)$ in Prisoner's Dilemma are
both coherent;  on the other hand, $(C,D)$ is not.  

\begin{theorem}\label{thm:translucenteq}
The mixed strategy profile $\sigma$ of game $\G$ is coherent iff $\sigma$ is a
translucent equilibrium of $\G$.n
\end{theorem}

\begin{proof}
Fix a game $\G = (\Omega, \mathbf{s}, f, \PR_1,\ldots,\PR_N)$, and
suppose that
$\sigma$ is a coherent strategy profile in $\G$.  We construct a counterfactual
structure $M = (\Omega, \mathbf{s}, f, \PR_1,\ldots,\PR_N)$ as follows:
\begin{itemize}
\item $\Omega = S$;
\item $\strat(s) = s$;
\item $\PR_i(\omega)(\omega') = \left\{
\begin{array}{ll}
1 &\mbox{if $\omega \notin \supp(\sigma_i)$, $\omega = \omega'$}\\
0 &\mbox{if $\omega \notin \supp(\sigma_i)$, $\omega \ne \omega'$}\\
\sigma_{-i}(\strat_{-i}(\omega')) &\mbox{if $\omega \in \supp(\sigma_i)$,
$\strat_i(\omega') = \strat_i(\omega)$}\\
0 &\mbox{if $\omega \in \supp(\sigma_i)$,
$\strat_i(\omega') \ne \strat_i(\omega)$};
\end{array}
\right.$
\item $f(\omega,i,s_i') = \left\{
\begin{array}{ll}
(s_i',\mathbf{s}_{-i}(\omega)) &\mbox{if $\omega \notin
\supp(\sigma_i)$}\\
\omega &\mbox{if $\omega \in \supp(\sigma_i)$, $s_i' = \strat_i(\omega)$}\\
(s_i',s_{-i}') & \mbox{if $\omega \in \supp(\sigma_i)$, $s_i' \ne
\strat_i(\omega)$, where $s_{i}'$ is a}\\ &\mbox{strategy such that
$u_i(\mathbf{s}_i(\omega),\sigma_{-i})\geq u_i(s')$;} \\
&\mbox{such a strategy is
guaranteed to exist since}\\ &\mbox{$\sigma$ is coherent.} 
\end{array}
\right.$
\end{itemize}

We first show that $M$ is a finite counterfactual structure appropriate
for $\G$; in particular, $\PR_i$ satisfies PR1 and PR2 and $f_i$
satisfies CS1 and CS2.  For PR1 and PR2, there are two cases.  If
$\omega \notin \supp(\sigma)$, then $\PR_i(\omega)(\omega) = 1$, so PR1
and PR2 clearly hold.  If $\omega \notin \supp(\omega)$, then 
$\PR_i(\omega)(\omega) > 0$ iff $\strat_i(\omega) = \strat_i(\omega')$.
Moreover, if $\strat_i(\omega) = \strat_i(\omega')$, then it is
immediate from the definition that $\PR_i(\omega) = \PR_i(\omega')$, so
PR2. holds.  That CS1 and CS2 hold is immediate from the definition of $f_i$.

To show that $\sigma$ is a translucent equilibrium in $M$, let
$\Omega' = \supp(\sigma)$.  For each state $\omega \in \Omega'$, TE1
clearly holds.  Note that if $\omega \in
\supp(\sigma)$, then $\PR_i(\omega) = (\strat_i(\omega),
\sigma_{-i}(\omega))$ (identifying the strategy profile with a
probability measure), so TE2 and TE3 clearly hold. 
It remains to show that TE4 holds, that is, 
that every player is rational at every state $\omega\in\Omega'$.

Thus, we must show that $\EU_i(\omega) \ge \EU(\omega,s_i^*)$ for all
$s_i^* \in S_i$.  Note that
$$\begin{array}{lll}
\EU_i(\omega)
&= &\sum_{\omega'\in\Omega}\mathcal
\PR_i(\omega)(\omega')u_i(\mathbf{s}_i(\omega),\mathbf{s}_{-i}(\omega')) \\
&= &\sum_{\{\omega'\in\Omega: \strat_i(\omega') = \strat_i(\omega)\}}
\sigma_{-i}(\strat_{-i}(\omega'))
u_i(\mathbf{s}_i(\omega),\mathbf{s}_{-i}(\omega'))\\ 
&= &\sum_{s_{-i}''\in S_{-i}}
u_i(\mathbf{s}_i(\omega),s_{-i}'')\\ 
&=&u_i(\strat_i(\omega),\sigma_{-i}).
\end{array}
$$
By definition,
$$
\EU_i(\omega,\ssigma_i^*)=\sum_{\omega'\in\Omega}\mathcal
\PR_{i,\ssigma_i^*}(\omega)(\omega')u_i(\ssigma_i^*,\mathbf{s}_{-i}(\omega'))$$
and 
$$\PR_{i,\ssigma'}(\omega)(\omega')=\sum_{\{\omega'' :
f(\omega'',i,\ssigma')=\omega'\}}\mathcal P\mathcal R_i(\omega)(\omega''). 
$$
Now if 
$s_i^*=\mathbf{s}_i(\omega)$, then $f(\omega,i,s_i^*)$.  In this case,
it is easy to check that $\PR_{i,\ssigma_i^*}(\omega) = \PR_i(\omega)$,
so 
$\EU_i(\omega,s_i^*)=\EU_i(\omega) = \EU_i(s_i,\sigma_{-i})$, and TE4
clearly holds.  
On the other hand, if 
$s_i^*\neq\mathbf{s}_i(\omega)$, then 
$$\begin{array}{lll}
\EU_i(\omega,s_i^*)
&=&\sum_{\omega'\in\Omega}\sum_{\{\omega'':f(\omega'',i,s_i^*)=\omega'\}}\PR_i(\omega)(\omega'')u_i(s_i^*,\mathbf{s}_{-i}(\omega'))\\
&=&\sum_{\{\omega'\in\Omega: \strat_i(\omega') =
s_i^*\}}\sum_{\{\omega'':f(\omega'',i,s_i^*)=\omega',\, 
\strat_i(\omega'') = \strat_i(\omega)\}}\sigma_{-i}(\omega'')
u_i(s_i^*,\mathbf{s}_{-i}(\omega'))\\ 
&=&\sum_{\{\omega'\in\Omega: \strat_i(\omega') =
s_i^*\}}\sum_{\{\omega'':f(\omega'',i,s_i^*)=\omega',\, 
\strat_i(\omega'') = \strat_i(\omega)\}}\sigma_{-i}(\omega'')
u_i(f(\omega'',i,s_i^*)).\\ 
\end{array}
$$
By definition, $u_i(f(\omega'',i,s_i^*))\leq
u_i(\strat_i(\omega''),\sigma_{-i})=u_i(\strat_i(\omega),\sigma_{-i})$. Thus,
$$\begin{array}{lll}
\EU_i(\omega,s_i^*)
&\le&\sum_{\{\omega'\in\Omega: \strat_i(\omega') =
s_i^*\}}\sum_{\{\omega'':f(\omega'',i,s_i^*)=\omega',\, 
\strat_i(\omega'')= \strat_i(\omega)\}}\sigma_{-i}(\omega'')
u_i(\strat_i(\omega),\sigma_{-i})\\ 
&=&u_i(\strat_i(\omega),\sigma_{-i}) \sum_{\{\omega'\in\Omega: \strat_i(\omega') =
s_i^*\}}\sum_{\{\omega'':f(\omega'',i,s_i^*)=\omega',\, 
\strat_i(\omega'') = \strat_i(\omega)\}}\sigma_{-i}(\omega'')\\
&=&u_i(\strat_i(\omega),\sigma_{-i}).
\end{array}
$$
This completes the proof that TE4 holds, and the proof of the ``only if''
direction of the argument

The ``if'' is actually much simpler. Suppose, by way of contradiction,
that $\sigma$ is not coherent.  Then
there is a player $i$ and a strategy $s_i\in\text{supp}(\sigma_i)$ such
that for all $s_{-i}' \in S_i$, we have $u_i(s_i,\sigma_{-i})<u_i(s')$. It
follows that, for all counterfactual structures $M$, no matter what the
beliefs and the closest-state functions are in $M$,  
it is always strictly profitable for player $i$ to switch strategy from
$s_i$ to $s_i'$. Consequently, $i$ is not rational at a state $\omega$
such that $s_i(\omega)=s_i$, contradicting TE4.
\end{proof}

We conclude by comparing translucent equilibrium to iterated minimax
domination.  We begin by reviewing the relevant definitions.

\begin{definition} {\rm Strategy $s_i$ for player $i$ in game $\G
= (P,\SSigma_1,\ldots,\SSigma_N,u_1,\ldots,u_N)$ is 
\emph{minimax dominated with respect to $S'_{-i}\subseteq S_{-i}$} iff
there 
exists a strategy $s'_i \in S_i$ such that  
$$\min_{t_{-i} \in
S'_{-i}} u_i(s'_i, t_{-i}) > \max_{t_{-i} \in 
S'_{-i}} u_i(s_i, t_{-i}).$$}
\wbox
\end{definition}

\noindent Thus, player $i$'s strategy $s_i$ is minimax dominated
with respect to $S'_{-i}$ iff there exists
a strategy $s_i'$ for player $i$ such that the worst-case payoff for
player $i$ if he uses $s'$ is strictly better than his best-case
payoff if he uses $s$, given that the other players are restricted to using a
strategy in $S'_{-i}$.

\begin{definition} Given a game $\G =
(P,\SSigma_1,\ldots,\SSigma_N,u_1,\ldots,u_N)$,  
define $\NSD_j^k(\G)$ inductively: let $\NSD_j^0(\G) =
S_j$ and let $\NSD_j^{k+1}(\G)$
consist of the strategies in $\NSD_j^{k}(\G)$ not minimax 
dominated with respect to $\NSD_{-j}^{k}(\G)$.
Strategy $s \in S_i$ \emph{survives iterated minimax domination}
if $s \in \cap_k\NSD_i^k(\G)$.  \wbox
\end{definition}

As these definitions show, only pure strategies of individual players
survive (or do not survive) iterated minimax domination.  It is not hard
to show that both $C$ and $D$ survive iterated minimax domination.  But,
as we have observed, although $(C,C)$ and $(D,D)$ are translucent
equilibria in Prisoner's Dilemma, $(C,D)$ is not.  So not every profile
made up of strategies that survive iterated minimax domination is a
translucent equilibrium.  On the other hand, as the following example
shows, the strategies in a profile that is a translucent equilibrium may
not survive iterated minimax domination.  

\begin{example} {\rm Consider the two-player game where $S_1 =
\{s_1^1,s_2^1,s_3^1\}$, $S_2 = \{s_1^2,s_2^2,s_3^2\}$, and the utilities
are defined by the following table:
\begin{table}[htb]
\begin{center}
\begin{tabular}{|| c | c | c | c ||}
\hline
{} & $s_1^2$ & $s_2^2$ & $s_3^2$\\
\hline
$s_1^1$ & (1,1) & (1,2) & (1,3) \\
\hline
$s_2^1$ & (2,1) & (2,2) & (2,3) \\
\hline
$s_3^1$ & (3,1) & (3,2) & (3,3)) \\
\hline
\end{tabular}
\end{center}
\end{table}
Thus, if player $i$ plays $s_i^k$, then his utility is $k$, independent
of what the other player does.

It easily follows from Theorem~\ref{thm:translucenteq} that
$(s_1^2,s_2^2)$ is a translucent equilibrium, since it is coherent (we
can take $s'$ in Definition~\ref{dfn:coherent} to be $(s_1^1,s_1^2)$.
On the other hand, $s_1^j$ and $s_2^j$ are minimax dominated by $s_3^j$.
\wbox }
\end{example}
}

}

\section{Social dilemmas}\label{se:social dilemmas model}

Social dilemmas are situations in which there is a tension
between the collective interest and individual interests: every
individual has an 
incentive to deviate from the common good and act selfishly, but if
everyone deviates, then they are all worse off. Personal and
professional relationships, depletion of natural resources, climate
protection, security of energy supply, and price competition in markets
are all instances of social dilemmas. 

As we said in the introduction, we formally define a social dilemma
as a normal-form game with a unique Nash equilibrium and a unique
welfare-maximizing profile, both pure strategy profiles, 
such that each player's utility if $\ssigma^W$ is played is higher
than his utility if $\ssigma^N$ is played.
While this is
a quite restricted set of games, it includes many that have been quite
well studied.  Here, we focus on the following games:
\begin{description}
\item[\textbf{Prisoner's Dilemma.}]  Two players can either cooperate
($C$) or defect ($D$).  To relate our results to experimental results on
Prisoner's Dilemma, we think of cooperation as meaning that a player
pays a cost $c > 0$ to give a benefit $b>c$ to the other player.  If a
player defects, he pays nothing 
and gives nothing.  Thus, the payoff of $(D,D)$ is
$(0,0)$, the payoff of $(C,C)$ is $(b-c,b-c)$, and the payoffs of $(D,C)$ and
$(C,D)$ are $(b,-c)$ and $(-c,b)$, respectively.  
Condition $b>c$ implies that $(D,D)$ is the unique Nash equilibrium
and $(C,C)$ is the unique welfare-maximizing profile.  

\item[\textbf{Public Goods game.}] $N\geq2$ contributors are endowed
with 1 dollar  each; they must simultaneously decide how much, if
anything, to contribute to a public pool. 
(The contributions must be in whole cent amounts.)
The total amount in the pot is then multiplied by a
constant strictly between 1 and $N$,
and then evenly redistributed among all players. So
the payoff of player $i$ is
$u_i(x_1,\ldots,x_N)=1-x_i+\rho(x_1+\ldots+x_N)$, where 
$x_i$ denotes $i$'s contribution, and $\rho\in\left(\frac1N,1\right)$
is the \emph{marginal return}.  
%
(Thus, the pool is multiplied by $\rho N$ before being split
evenly among all players.)  Everyone contributing nothing to
the pool is the unique Nash equilibrium, and everyone contributing their
whole endowment to the pool is the unique welfare-maximizing profile.

\item[\textbf{Bertrand Competition.}] $N\geq2$ firms compete to sell
their identical product  at a price between the ``price floor'' $L\geq 2$
and the ``reservation value'' $H$.
(Again, we assume that $H$ and $L$ are integers, and all prices must
be integers.)
 The firm that chooses the lowest
price, say $s$, sells the product at that price, getting a payoff of
$s$, while all other firms get a payoff of 0. If there are ties,  
then the sales are split equally among all firms that choose the lowest
price.   Now everyone choosing $L$ is the unique Nash equilibrium, and
everyone choosing $H$ is the unique welfare-maximizing profile.%
\footnote{We require that $L \ge 2$ for otherwise we would not have a
  unique Nash equilibrium, a condition we imposed on Social Dilemmas.
If $L = 1$ and $N=2$, we get two Nash
  equilibria: $(2,2)$ and $(1,1)$; similarly, for $L=0$, we also get
  multiple Nash equilibria, for all values of $N \ge 2$.}

\item[\textbf{Traveler's Dilemma.}] Two travelers have identical
luggage, which is damaged (in an identical way) by an airline.  The
airline offers to recompense them for their luggage. 
They may ask for any dollar amount between $L$ and $H$
(where $L$ and $H$ are both positive integers).
There is only one catch.  If they ask for the same amount, then that is
what they will both receive.  However, if they ask for different
amounts---say one asks 
for $m$ and the other for $m'$, with $m < m'$---then whoever asks
for $m$ (the lower amount) will get $m+b$ ($m$ and a bonus of $b$),
while the other player gets $m-b$: the lower amount and a penalty of
$b$.  It is easy to see that $(L,L)$ is 
the unique Nash equilibrium, while $(H,H)$ maximizes social welfare,
independent of $b$.
\end{description}


From here on, we say that a player \emph{cooperates} if he plays
his part of the socially-welfare maximizing strategy profile and
\emph{defects} if he plays his part of the Nash equilibrium strategy profile.

While Nash equilibrium predicts that people should always defect in
social dilemmas, in practice, we see a great deal of cooperative
behavior; that is, people often play (their part of) the
welfare-maximizing 
profile rather than (their part of) the Nash equilibrium profile.  Of
course, there have been many attempts to explain this.
Evolutionary theories may explain
cooperative behavior among genetically related individuals \cite{H64} or
when future interactions among the same subjects are
likely \cite{NS,T71}; see \cite{Nowak06} for a review of the five rules of cooperation. However, we
often observe cooperation even in
one-shot anonymous experiments 
among unrelated players
\cite{Rapoport}.


Although we do see a great deal of cooperation in these games, we do
not always see it.  
Here are some of the regularities that have been observed:
\begin{itemize}
\item The degree of cooperation in the Prisoner's dilemma depends
positively on the benefit of mutual cooperation and negatively on the
cost of cooperation \cite{CJR,EZ,Rapoport}.  
\item The degree of cooperation in the Traveler's Dilemma depends
negatively on the bonus/penalty \cite{CGGH99}.
\item The degree of cooperation in the Public Goods game depends
positively on the constant marginal return \cite{Gu,IWT}. 
\item The degree of cooperation in the Public Goods game depends
positively on the number of players \cite{BarceloCapraro,IWW,Ze}. 
\item The degree of cooperation in the Bertrand Competition depends
negatively on the number of players \cite{DG02}. 
\item The degree of cooperation in the Bertrand Competition depends
  negatively on the price floor \cite{D07}. 
\end{itemize}

\section{Explaining social dilemmas using translucency}\label{sec:explanation}

\commentout{
As we suggested in the introduction, we hope to use translucent
equilibrium to explain cooperation.  To do this, we need to construct
counterfactual structures that captures some of the intuition we have
about social welfare games.

\commentout{
We make the following hypotheses:

\begin{enumerate}
\item We consider $n$-player games such that:
\begin{itemize}
\item There is a unique welfare-maximizing profile of strategies $(w_1,\ldots,w_N)$;
\item There is a unique Nash equilibrium $(o_1,\ldots,o_N)$.
\end{itemize} 
\item Players are completely anonymous. This implies that they cannot
influence one another, that is, the reasoning of one player is
completely independent of the reasoning of the other players. 
\item Each player believes that the other players have a tension between
playing their optimal strategy and playing the welfare maximizing
strategy. 
\end{enumerate}
}

%
} 
 
\commentout{
Given a social dilemma $\G = (P,S_1,\ldots, S_N, u_1, \ldots, u_N)$, let 
$(s^N_1,\ldots, s^N_N)$ be the unique Nash equilibrium,
and let  $(s^W_1,\ldots, s^W_N)$ be the unique welfare-maximizing profile.
We now define a parameterized family of counterfactual structures
$M^\G(\sigma,\alpha) = 
(\Omega,\strat,\PR_1,\ldots,\PR_N)$, where $\alpha = 
(\alpha_1,\ldots, \alpha_N)$, $\alpha_i \in [0,1]$.
Intuitively, $\alpha_j$ describes how likely player $j$ is to
learn about a deviation (i.e., how ``translucent'' the other players
are to $j$) and $\sigma=(\sigma_{-i})_{i}$ describes what player $i$ believes the other players are going to do. We assume that if $j$ detects a deviation on the part of
any player, then she will play $s_j^N$. We also assume that the beliefs are supported on profiles of strategies $(s_1,\ldots,s_N)$, where $s_i\in\{s_i^N,s_i^W\}$. Moreover, we assume that $\sigma_{-i}$ is a power measure, in the sense that, for every $i,j$, with $i\ne j$, player $i$ believes that Player $j$ will be playing the strategy $\sigma_{i,j}:=p_{i,j}s_{j}^W+(1-p_{i,j})s_j^N$, where 
\begin{enumerate}
\item $p_{i,j}=:p_{-i}$ does not depend on $j$.
\item $\sigma_{-i}=\otimes_{j\ne i}\sigma_{i,j}$.
\end{enumerate} 
There are clearly a number of assumptions
being made here; we discuss these assumptions below.  We first define
$M^\G(\alpha, \PR')$ formally:
\begin{itemize}
\item $\Omega = S\times \{0,1\}^{n}$.  (The second component of the
state, which is an element of $\{0,1\}^{n}$, is used to determine the
closest-state function.  Roughly speaking, if $v_j = 1$, then player
$j$ learns  about a deviation if there is one; if $v_j = 0$, he does not.)
\item $\strat((s,v)) = s$.
\item $f((s,v),i,s_i^*) = 
\left\{
\begin{array}{ll}
(s,v) &\mbox{if $s_i = s_i^*$,}\\
(s',v) &\mbox{if $s_i \ne s_i^*$, where $s'_i = s_i^*$ and for $j
\ne i$,}\\
&\mbox{$s'_j = s_j$ if $v_j = 0$ and $s'_j = s^N_j$ if $v_j = 1$.}
\end{array}
\right.$

\noindent Thus, if player $i$ changes strategy from $s_i$ to $s_i'$, $s_i'\neq
 s_i$, then each other player $j$ either deviates to his component of
the Nash equilibrium or continues with his current strategy, 
depending on whether $v_j$ is 0 or 1. Roughly speaking, he switches to his
component of the Nash equilibrium if he learns about a deviation
(i.e., if $v_j = 1$).
\item $\PR_i(s,v)(s',v') =\left\{
\begin{array}{lll}
0 &\mbox{if $s_i \ne s'_i$ or $v_i \ne v_i'$,}\\
\sigma_{-i}(s'_i)\Pi_{\{j \ne i: v_j = 1\}}\alpha_j\Pi\, _{\{j \ne
i: \, v_j = 0\}}(1-\alpha_j) &\mbox{otherwise.}
\end{array}
\right.$
\noindent Thus, the probability of the $s'$ component of the state is
determined 
by $\sigma$ (with the standard assumption that player $i$ knows his
own strategy).  The probability of the $v'$ component is determined by
assuming that each player $j$ independently learns about a deviation
by $i$ with probability $\alpha_j$.\footnote{For simplicity, we are
  assuming that the probability that $j$ learns about a deviation by
  $i$ is the same for each player $i$.  More generally, we could have
  a probability $\alpha_{ji}$ for the probability that $j$ learns
  about a deviation by $i$.  Nothing significant would change in our analysis.}
\end{itemize}
\commentout{
\noindent Thus, if player $i$ changes strategy from $s_i$ to $s_i'$, $s_i'\neq
 s_i$, then each other player $j$ deviates to either his component of
the Nash equilibrium or his component of welfare-maximizing profile,
depending on whether $v_j$ is 0 or 1.
\commentout{
\textbf{Beliefs.}
To define the beliefs $\PR_i(\omega)$, fix state $\omega=(s,v)$ and
player $i$. As we mentioned earlier, the idea is that player $i$ assigns
probability $\alpha_i(j,\omega)\in[0,1]$ to the event that player $j$
($j\neq i$) changes from strategy $s_j$ to his or her Nash equilibrium
strategy $o_j$ and probability $1-\alpha_i(j,\omega)$ to the event that
player $j$ changes strategy from $s_j$ to his or her optimal
welfare-maximizing strategy $w_j$. We treat these probabilities as
parameters of the model. Starting from them, we can define
$\PR_i(\omega)$ as follows. Denote $R_i(s_i)$ the set of profiles of
strategies $s'$ such that $s_i'=s_i$ and $s_j'\in\{o_j,w_j\}$. Using the
hypothesis that agents cannot influence one another, we can compute the
probability that agents end up playing the profile of strategies $s'\in
R_i(s_i)$. Let $r_i(s')=\{j\in P\setminus\{i\}: s_j'=o_j\}$. The profile
of strategies $s'$ is reached with probability  
\begin{align}
\beta_i(\omega,s'):=\prod_{j\in r_i(s')}\alpha_i(j,\omega)\prod_{j\in P\setminus r_i(s')\setminus\{i\}}(1-\alpha_i(j,\omega)).
\end{align}
}
\item $\PR_i(s,v)(s',v') =\left\{
\begin{array}{lll}
0 &\mbox{if $s_i \ne s'_i$ or $v_i \ne v_i'$,}\\
\PR'_i(s_i)(s'_{-i})\prod_{j\ne i} (v'_j\alpha_i +
(1-v'_j)(1-\alpha_i))
&\mbox{if $s_i = s'_i$, $v_i = v'_i$.}
\end{array}
\right.$
\end{itemize}

We must check that this $G^\G(\sigma,\alpha)$ is a indeed a counterfactual
structure.  Clearly $f$ satisfies CS1 and CS2.  
It is also easy to see from the definition that $\PR_i(s,v)(s',v') > 0$
only if $s_i = s_i'$; moreover, if $s_i = s_i'$, then $\PR_i(s,v) = 
\Pr(s',v')$.  PR1 and PR2 immediately follow once we show that
$\PR_i(s,v)$ is a probability measure, which is done in the following
lemma.  

\begin{lemma}\label{lem}
For all $i\in P$ and all states $(s,v) \in\Omega$, 
$\PR_i(s,v)$ is a probability measure.
\end{lemma}

\begin{proof}
It clearly suffices to show that $\sum_{(s',v') \in \Omega} \PR_i(s,v)
(s',v') = 1$.  Since $\PR_i(s,v)(s',v') = 0$ if $s_i' \ne s_i$, it
suffices to show that $$\sum_{\{(s',v') \in \Omega: s_i' = s_i\}}
\PR_i(s,v) (s',v') = 1.$$  Now 
$$\begin{array}{lll}
&\sum_{\{(s',v') \in \Omega: s_i' = s_i, \, v_i' = v_i\}} \PR_i(s,v) (s',v')\\
= &\sum_{s'_{-i} \in S_{-i}} \sum_{v'_{-i} \in \{0,1\}^{N-1}}
\PR_i(s,v)((s_i,s'_{-i}),(v_i,v_i'))\\
= &\sum_{s'_{-i} \in S_{-i}} \sum_{\{v_{-i}' \in \{0,1\}^{N-1}: \}}
\sigma_{-i}(s'_{-i})
\prod_{\{j\ne i: \, v'_j = 1\}}\alpha_j \prod_{\{j\ne i: \, v'_j =0\{ }(1-\alpha_j )\\
= &\sum_{s'_{-i} \in S_{-i}}  \sigma_{-i}(s'_{-i}) \sum_{\{v_{-i}' \in \{0,1\}^{N-1}: \}}
\prod_{\{j\ne i: \, v'_j = 1\}}\alpha_j \prod_{\{j\ne i: \, v'_j = 0\} }(1-\alpha_j ).
\end{array}
$$

Since $\sigma_{-i}$ is a probability measure on $S_{-i}$, it 
suffices to prove that 
\begin{equation}\label{eq1}
\sum_{v'_{-i} \in \{0,1\}^{N-1}}
\prod_{\{j\ne i: \, v'_j = 1\}}\alpha_j \prod_{\{j\ne i: \, v'_j = 0\}
}(1-\alpha_j ) = 1.
\end{equation}
We proceed by induction on $n$ for $n \ge 2$.  We leave the details to
the reader.  [[I MAY ADD MORE HERE LATER.]]
\commentout{
is $1-\alpha_i$ if $v'_j = 0$.  So we are essentially adding together
all $2^n$ terms of the form $\beta_1 \ldots \beta_N$, where $\beta_j$
is either $\alpha_i$ or $1-\alpha_i$.  
Clearly, if $n=2$, 
$$\sum_{v_{-i} \in \{0,1\}}
\prod_{j\ne i} (v'_j\alpha_i + (1-v'_j)(1-\alpha_i)) = \alpha_i +
(1-\alpha_i) = 1.$$

Suppose that (\ref{eq1}) holds for $n' \ge 2$; we prove it for $n' +
1$.  We assume without loss of generality that $i \ne n'+1$.  Then 
$$\begin{array}{lll}
&\sum_{v'_{-i} \in \{0,1\}^{n'}}
\prod_{j\ne i} (v'_j\alpha_i + (1-v'_j)(1-\alpha_i))\\ 
= &\sum_{v'_{-i} \in \{0,1\}^{n'-1}, v_{n'+1} = 0}
\prod_{j\ne i} (v'_j\alpha_i + (1-v'_j)(1-\alpha_i)) \\
&\ + \sum_{v'_{-i} \in \{0,1\}^{n'-1}, v_{n'+1} = 1}
\prod_{j\ne i} (v'_j\alpha_i + (1-v'_j)(1-\alpha_i)) \\
= &(1-\alpha_i) \sum_{v'_{-i} \in \{0,1\}^{n'}}
\prod_{j\ne i} (v'_j\alpha_i + (1-v'_j)(1-\alpha_i)) \\
& \ + \alpha_i\sum_{v'_{-i} \in \{0,1\}^{n'}}
\prod_{j\ne i} (v'_j\alpha_i + (1-v'_j)(1-\alpha_i)) \\
= &\alpha_i + (1-\alpha_i) \\
= &1.

\end{array}
$$
This completes the proof of (\ref{eq1}) and the lemma.
}
\end{proof}

\section{Applications of the model}\label{se:social dilemmas examples} 
}

We now show that the model described in the previous section allows to explain all patterns observed in experiments on social dilemmas. 

We start by reminding the definition of the major social dilemmas belonging to the class of games that we are considering.

For all these games, Nash equilibrium theory predicts that people should defect. Yet experimental studies have systematically rejected this prediction. Nevertheless, cooperation is not casual and typically depends on the payoffs and on the number of agents. Specifically, the following regularities have been observed.

\begin{itemize}
\item The rate of cooperation in the Prisoner's dilemma depends positively on the benefit $b$ and negatively on the cost $c$ \cite{Rapoport,CJR,EZ}.
\item The rate of cooperation in the Traveler's Dilemma depends negatively on the bonus/penalty $b$ \cite{CGGH}.
\item The rate of cooperation in the Public Goods game depends positively on the constant marginal return \cite{IWT,Gu}.
\item The rate of cooperation in the Public Goods game depends positively on the number of players \cite{IWW,Ze}.
\item The rate of cooperation in the Bertrand Competition depends
  negatively on the number of players \cite{DG02}. 
\end{itemize}
The purpose of this section is to show that the model described in the
previous section predicts all these regularities.

\commentout{
To start, we can say something about the `shape' of the probability $\alpha_i(j,\omega)$ under the assumption that players are anonymous. In state $\omega$, Player $i$ knows only the strategy he or she plays, so \emph{by definition} $\alpha_i(j,\omega)$ cannot depend on the strategies $\textbf{s}_j(\omega)$, with $j\neq i$. Similarly, in state $\omega$ Player $i$ knows that the other players do not know his or her strategy $\textbf{s}_i(\omega)$. Consequently, $\alpha_i(j,\omega)$ does not depend on the strategy $\textbf{s}_i(\omega)$ either. So, essentially, $\alpha_i(j,\omega)$ depends only on the pair of players $(i,j)$. Of course, it would be of great interest to study general situation where players know each other and so this probability will actually depend on the pair $(i,j)$. Under the hypothesis that players are completely anonymous, $\alpha_i(j,\omega)=\alpha_i$ is independent of Player $j$. In this situation, making explicit computations become easier and we indeed obtain that the model proposed predicts all experimental patterns mentioned above. 

\begin{theorem}\label{th:pd}
If $\alpha_1,\alpha_2\leq1-\frac{c}{b}$, then the unique equilibrium of the Prisoner's dilemma is $$\left(\alpha_1D+(1-\alpha_1)C,\alpha_2D+(1-\alpha_2)C\right).$$
\end{theorem}

\begin{proof}
We start by observing that Item 2 in the second condition of Definition \ref{def:equilibrium} implies that the only equilibrium in this model can be $\left(\alpha_1D+(1-\alpha_1)C,\alpha_2D+(1-\alpha_2)C\right)$. Other equilibria are not possible. To show that this is indeed an equilibrium, consider the set $\Omega'\subseteq\Omega$, defined by $$\Omega'=\{(s,(o, o)) : s\in S\}.$$ Properties (1) and (2) in Definition \ref{def:equilibrium} are certainly satisfied. It remains to show that players are rational on each of these states. 

We start by showing that $\omega=((C,D),(o,o))$ is rational for Player 1. Since $\PR_1(\omega)= (1-\alpha_1)((C, C), (o,o)) + \alpha_1((C, D), (o,o))$, one has
\begin{align*}
g_1(\omega)\\
&=\sum_{\omega'\in\Omega}\PR_1(\omega)(\omega')u_1(\mathbf{s}_1(\omega), \mathbf{s}_2(\omega'))\\
&= (1-\alpha_1) u_1(C, C) + \alpha_1u_1(C, D)\\
&= (1-\alpha_1)(b-c)-\alpha_1c\\
&= b(1-\alpha_1) - c
\end{align*}
In order to compute
$$
g_1^c(\omega,D)= \sum_{\omega'\in\Omega}\sum_{\omega'':f(\omega'',1,D)=\omega'} \PR_1(\omega)(\omega'')u_1(D,\mathbf{s}_2(\omega'))
$$
observe that there are only two possible $\omega''$ in the support of $\PR_1(\omega)$, namely $((C, D),(o, o))$ and $((C,C),(o,o))$. Either way, we have $f(\omega'',1,D) = ((D,D),(o,o))$ and so we have only one possibility for $\omega'$. Consequently, $g_1^c(\omega,D)$ is a convex combination of $u_1(D,D) = 0$; thus $g_1^c(\omega,D) = u_1(D,D) = 0$. So, in order for the state $\omega$ to be rational, we need the condition $b(1-\alpha-1)-c\geq0$, which corresponds to the hypothesis of the theorem. A similar computation shows that rationality of the other elements of $\Omega'$ does not add any other condition on $\alpha_1$. Clearly, imposing rationality for player 2 gives symmetric conditions on $\alpha_2$.
\end{proof}

\begin{remark}\label{rem:pd}
{\rm Observe that Theorem \ref{th:pd} agrees qualitatively with the first of the experimental regularities mentioned above. As $b$ increases, $1-\frac{c}{b}$ increase to $1$ and so $\alpha_1,\alpha_2$ are more likely to be smaller than it. So cooperation is easier, as $b$ increases. Analogously, cooperation is more difficult as $c\to b$. }
\end{remark}

\begin{theorem}
If $\alpha_1,\alpha_2\leq1-\frac{b}{H-L+b}$, then the unique equilibrium of the Traveler's Dilemma is $$\left(\alpha_1L+(1-\alpha_1)H,\alpha_2L+(1-\alpha_2)H\right).$$
\end{theorem}

\begin{proof}\label{th:td}
The proof is very similar to the one of the previous theorem. We define $$\Omega'=\{(s,(o,o)) : s_i\in\{L,H\}, \text{ for all } i=1,2\}.$$
Imposing that $\omega=((H,L),(o,o))$ be rational for Player 1 leads to
the condition $\alpha_1\leq1-\frac{b}{H-L+b}$. Analogously, imposing
that $\omega=((L,H),(o,o))$ be rational for Player 2 leads to the
condition $\alpha_2\leq1-\frac{b}{H-L+b}$. As in the previous theorem,
rationality of the other states in $\Omega'$ does not add any more
condition. 
\end{proof}

\begin{remark}\label{rem:td}
{\rm Theorem \ref{th:td} is also consistent with the experimental finding that cooperation is negatively related to the bonus/penalty $b$. As $b$ increases, $1-\frac{b}{H-L+b}$ decreases to 0 and cooperation gets then more and more difficult.}
\end{remark}

\begin{theorem}\label{th:pgg}
If $\alpha_i\leq1-\frac{1}{\rho(N-1)}$, for all $i\in P$, then the unique equilibrium of the N-person Public Goods game with marginal return $\rho$ is the profile of strategies $(\sigma_1,\ldots,\sigma_N)$, where $$\sigma_i=\alpha_iN+(1-\alpha_i)C,$$ and C stands for `contribute everything' and N stands for `contribute nothing'.
\end{theorem}

\begin{proof}
As in the previous proofs, let $$\Omega'=\{(s,(o,o)) : s_i\in\{C,N\}, \text{ for all }i=1,\ldots,n\}.$$ Fix $\omega=((C,N,\ldots,N),(o,\ldots,o))$. We show that Player 1 is rational at $\omega$. Denote $s_1=C$ and recall that $R_1(s_1)$ is the set of strategy profiles where Player $1$ plays $s_1=C$ and all other players either play C or play N. Define the following equivalence relation $\sim$ on $R_1(s_1)$: two profiles of strategies $s'=(s_1,s_{-1}')$ and $s''=(s_1,s_{-1}'')$ are equivalent if the number of players who do not contribute in $s'$ is equal to the number of players who do not contribute in $s''$. Let $k\in\{0,\ldots,N-1\}$ be such a number, which uniquely identify an equivalence class. Let $s'(k)$ be a representative of the class associated with $k\in\{0,\ldots,N-1\}$. 

We start observing that, if $\textbf{s}(\omega')\sim\textbf{s}(\omega'')\sim s'(k)$, then we have
\begin{align}\label{eq:equality}
\beta_1(\omega,\textbf{s}(\omega'))=\beta_1(\omega,\textbf{s}(\omega''))=\alpha_1^k(1-\alpha_1)^{N-1-k}\end{align}
\begin{align}
u_1(\textbf{s}_1(\omega),\textbf{s}_{-1}(\omega'))=u_1(\textbf{s}_1(\omega),\textbf{s}_{-1}(\omega''))=\rho(N-1-k).
\end{align}

Consequently, the partition $\sim$ allows us to write the sum defining $g_1(\omega)$ in a much easier way. We have
\begin{align*}
g_1(\omega)\\
&=\sum_{k=0}^{N-1}\binom{N-1}{k}\beta_1(\omega,s'(k))u_1(s'(k),(o,\ldots,o))\\
&=\sum_{k=0}^{N-1}\binom{N-1}{k}\alpha_1^k(1-\alpha_1)^{N-1-k}\rho(N-1-k)\\
&=\rho(1-\alpha_1)(N-1).
\end{align*}

On the other hand, similarly to the other proofs, we have
$g_1^c(\omega,N)=u_1(N,\ldots,N)=1$. So, rationality of $\omega$ gives
the condition $\alpha_1\leq1-\frac{1}{\rho(N-1)}$. Analogously to the
other proofs, rationality of the other states in $\Omega'$ does not add
any more condition on $\alpha_1$. The analogue conditions on the other
$\alpha$'s are obtained by reasoning with respect to the respective
players.  
\end{proof}

\begin{remark}\label{rem:pgg}
{\rm Also the predictions of Theorem \ref{th:pgg} are consistent with the experimental findings described above. The function $1-\frac{1}{\rho(N-1)}$ increases as either $\rho$ or $n$ increases, favoring cooperation. }
\end{remark}

\begin{theorem}\label{th:bc}
If $\alpha_i\leq1-\left(\frac{L}{H}\right)^{1/(N-1)}$, for all
$i=1,\ldots,n$, then the unique equilibrium of the Bertrand
Competition is the profile of strategies $(\sigma_1,\ldots,\sigma_N)$, where $$\sigma_i=\alpha_iL+(1-\alpha_i)H.$$
\end{theorem}

\begin{proof}
The proof of this results is similar to the proof of Theorem \ref{th:pgg}. The only difference is that now
$$
u_1(\omega,s_{-1}(s'(k))=\left\{
  \begin{array}{lll}
    0, & \hbox{if $k\geq1$} \\
    \frac{H}{n}, & \hbox{if $k=0$},\\
  \end{array}
\right. 
$$
Consequently, one has
\begin{align*}
g_1(\omega)=(1-\alpha_1)^{N-1}\frac{H}{n}.
\end{align*}
On the other hand, one has
$g_1^c(\omega,L)=u_1(L,\ldots,L)=\frac{L}{n}$. So, rationality of
$(H,L,\ldots,L)$ leads to the condition in stated on the theorem. 
\end{proof}

\begin{remark}\label{rem:bc}
{\rm Notice that since $L<H$, then
  $\left(\frac{L}{H}\right)^{1/(N-1)}$ is increasing in $n$ and
  converges to $1$ as $n$ goes off to infinity. Consequently, Theorem
  \ref{th:bc} predicts that cooperation becomes more and more
  difficult as the number of agents increases, consistently with the
  experimental findings. Additionally, observe that Theorem
  \ref{th:bc} also predicts that cooperation is more difficult as the
  price floor $L$ increases. We are aware of only one experimental
  study on the Bertrand Competition with varying price floor and,
  indeed, the authors found that cooperation is more difficult as $L$
  increases \cite{D07}.} 
\end{remark}
}
}

As we suggested in the introduction, we hope to use translucency to
explain cooperation in social dilemmas.  To do this, we have to make
assumptions about an agent's beliefs.  
Say that an agent $i$ has
\emph{type $(\alpha,\beta,C)$} if $i$ intends to cooperate (the
parameter $C$ stands for \emph{cooperate}) and 
believes that (a) if he deviates from that, then each other 
agent will independently realize this with probability $\alpha$; 
(b) if an agent $j$ realizes that $i$ is not going to cooperate,
then $j$ will defect;
and (c) all other players will either cooperate or defect, and they
will cooperate with probability $\beta$.

The standard assumption, of course, is that $\alpha = 0$.  Our results
are only of interest if $\alpha > 0$. The assumption that $i$ believes
that agent $j$ will defect if she realizes that $i$ is going to
deviate from cooperation seems reasonable; defection is the ``safe''
strategy.  We stress that, for our results, it does not matter what
$j$ actually does.  All that matters are $i$'s beliefs about what $j$
will do.  The assumption that players will either cooperate or defect
is trivially true in Prisoner's Dilemma, but is a highly nontrivial
assumption in the other games we consider.  While cooperation and
defection are arguably the most salient strategies, we do in practice
see players using other strategies. 
For instance, the distribution of strategies in the Public Goods game
is typically tri-modal, concentrated on contributing nothing,
contributing everything, and contributing half \cite{CJR}. 
We made this assumption mainly for technical convenience; it
makes the calculations much easier.  We 
believe that results qualitatively similar to ours will hold under a
much weaker assumption, namely, that a type $(\alpha,\beta,C)$ player
believes that other players will cooperate with probability $\beta$
(without assuming that they will defect with probability $1-\beta$).

Similarly, the assumptions that a social dilemma has a unique Nash
equilibrium and a unique social-welfare maximizing strategy were made
largely for technical reasons.  We can drop these assumptions,
although that would require more complicated assumptions about
players' beliefs.

The key feature of our current assumptions is that 
the type of player $i$ determines the distributions $\mu_i^{s_i,s_i'}$.
In a social dilemma with $N$ agents,
the distribution $\mu_i^{s_i,s_i}$ assigns probability
$\beta^r(1-\beta)^{N-1-r}$ to a strategy profile $s_{-i}$ for the
players other than $i$ if exactly $r$ players cooperate in $s_{-i}$
and the remaining $N-1-r$ players defect; it assigns probability 0 to all
other strategy profiles.  The distributions
$\mu_i^{s_i,s_i'}$ for $s_i' \ne s_i$ all have the form 
$\sum_{J \subseteq \{1,\ldots,i-1, i+1,\ldots, N\}} \alpha^{|J|}
(1-\alpha)^{N-1-|J|} \mu_i^{J}$, where $\mu_i^J$ is the distribution
  that assigns probability $\beta^k(1-\beta)^{N-|J|-k}$ to a profile
  where $k \le N-1 - 
  |J|$ players not in $J$ cooperate, and the remaining players (which
  includes all the players in $J$) defect.  Thus, $\mu_i^J$ is the
  distribution that describes what player $i$'s beliefs would
  be if he knew that exactly the players in $J$ had noticed his
  deviation (which happens with probability $\alpha^{|J|}
(1-\alpha)^{N-1-|J|}$). 
In the remainder of this section, when we talk about best response, it
is with respect to these beliefs.


For our purposes, it does not matter where the beliefs $\alpha$ and $\beta$
that make up a player's type come from.  We do not assume, for
example, that other players are 
(translucently)
rational.  For example, $i$ may believe that some players cooperate
because they are altruistic, while others may cooperate because they have
mistaken beliefs.  We can
think of $\beta$ as summarizing $i$'s previous experience of
cooperation when playing social dilemmas.  Here we are interested in
the impact of the parameters of the game on the reasonableness of
cooperation, given a player's type.

The following four propositions analyze the four social dilemmas in
turn.  We start with Prisoners Dilemma.  Recall that $b$ is the
benefit of cooperation and $c$ is its cost.

\begin{proposition}\label{prop:PD}
In Prisoner's Dilemma,  it is translucently rational for a player of 
type $(\alpha,\beta,C)$ to cooperate
if and only if 
$\alpha \beta b \ge c$.
\end{proposition}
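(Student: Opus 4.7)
The plan is a direct calculation based on Definition \ref{brdefinitionnew}, applied to the beliefs induced by a type $(\alpha,\beta,C)$ player in the two-player Prisoner's Dilemma. Since $N=2$, the sums over subsets $J \subseteq \{1,\ldots,i-1,i+1,\ldots,N\}$ collapse to just two cases ($J=\emptyset$ and $J=\{j\}$), which makes everything explicit.

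First I would write down the two relevant belief distributions. For $\mu_i^{C,C}$, the opponent cooperates with probability $\beta$ and defects with probability $1-\beta$. For $\mu_i^{C,D}$, with probability $\alpha$ the opponent detects the deviation and defects, and with probability $1-\alpha$ she does not detect it and cooperates with probability $\beta$ (defects with probability $1-\beta$); so the opponent cooperates with total probability $(1-\alpha)\beta$ and defects with probability $\alpha + (1-\alpha)(1-\beta)$.

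Next I would compute the two expected utilities using the Prisoner's Dilemma payoffs ($u_i(C,C)=b-c$, $u_i(C,D)=-c$, $u_i(D,C)=b$, $u_i(D,D)=0$):
\begin{align*}
\sum_{s_{-i}'} \mu_i^{C,C}(s_{-i}') u_i(C,s_{-i}') &= \beta(b-c) + (1-\beta)(-c) = \beta b - c, \\
\sum_{s_{-i}'} \mu_i^{C,D}(s_{-i}') u_i(D,s_{-i}') &= (1-\alpha)\beta \cdot b + \bigl(\alpha + (1-\alpha)(1-\beta)\bigr)\cdot 0 = (1-\alpha)\beta b.
\end{align*}

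Finally, by Definition \ref{brdefinitionnew}, cooperating is a best response to these beliefs iff $\beta b - c \ge (1-\alpha)\beta b$, which simplifies to $\alpha\beta b \ge c$. To get the ``only if'' direction one also needs to check that no other deviation (there is only one, namely $D$) gives a strictly higher expected utility, but since $D$ is the unique alternative strategy, this is automatic. There is no real obstacle here; the content of the proof is entirely in correctly unpacking the type $(\alpha,\beta,C)$ into the belief $\mu_i^{C,D}$ via the convex combination given before the proposition, and then the inequality is immediate.
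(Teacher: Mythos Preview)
Your proof is correct and essentially identical to the paper's: both compute the expected payoff from cooperating as $\beta b - c$ (the paper writes it as $\beta(b-c)-(1-\beta)c$, which is the same thing), observe that under deviation the opponent cooperates with probability $(1-\alpha)\beta$ so the expected payoff from defecting is $(1-\alpha)\beta b$, and then reduce the resulting inequality to $\alpha\beta b \ge c$. The only difference is cosmetic---you explicitly frame everything via the $\mu_i^{C,C}$ and $\mu_i^{C,D}$ notation of Definition~\ref{brdefinitionnew}, whereas the paper argues more informally.
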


\begin{proof}
If player $i$ has type $(\alpha,\beta,C)$ and cooperates in Prisoner's
Dilemma, then his expected
payoff is $\beta(b-c) - (1-\beta)c$, since player $i$ believes that
$j \ne i$
will cooperate with 
probability $\beta$.  However, if $i$ deviates from his intended
strategy of cooperation, then $j$ will catch him with probability
$\alpha$ and also defect.  Thus, if $i$ deviates, then $i$'s belief
that $j$ will cooperate goes down from $\beta$ to
$(1-\alpha)\beta$.  
(We remark that this is the case in all social dilemmas; this fact
will be used in all our arguments.)
This means that $i$'s expected payoff if he
deviates by defecting is
$(1-\alpha)\beta b$.  So cooperating
is a best response if $\beta(b-c) - (1-\beta)c \ge (1-\alpha)\beta b$.   A
little algebra 
shows that this reduces to $\alpha \beta b \ge c$. 
\end{proof}

As we would expect, if $\alpha = 0$, then cooperation is not
a best response in Prisoner's Dilemma; this is just the standard
argument that defection 
dominates cooperation.   But if $\alpha > 0$, then cooperation can be rational.
Moreover, if we fix $\alpha$,  
the greater the benefit of cooperation and the smaller the cost, then
the smaller the value of $\beta$ that still allows cooperation to be a
best response.

\commentout{
\begin{proposition} If $\Gamma$ is prisoner's dilemma, and 
$\sigma = (p_1C + (1-p_1)D, p_2C + (1-p_2)D)$.  Then
$\sigma$ is a translucent equilibrium in $M^{\G}(\sigma,\alpha)$ if 
$\min(\alpha_1 b p_1, \alpha_2, b p_2) \ge c$.
\end{proposition}

\begin{proof} Take $\Omega'$ to consist of all states $(s,v)$ such
that $s$ is in the support of $\sigma$.  IT is immediate that TE1,
TE2, and TE3 hold, so we just have to check TE4.  Clearly if a player
defects at state $\omega$, this is rational.  So we just have to show
that cooperation is rational for both player 1 and player 2.
\end{proof}
}

We next consider Traveler's Dilemma.  Recall that $b$ is the
reward/punishment, and $H$ and $L$ are the high and low payoffs,
respectively. 



\begin{proposition}\label{prop:TD}
In Traveler's Dilemma, 
it is translucently rational for a player of $(\alpha,\beta,C)$ to cooperate
if and only if
$b\le
\left\{
\begin{array}{ll}
\frac{(H-L)\beta}{1-\alpha\beta} &\mbox{if $\alpha\ge\frac12$}\\
\min\left(\frac{(H-L)\beta}{1-\alpha\beta},\frac{H-L-1}{1-2\alpha}\right) &\mbox{if $\alpha<\frac12$;}
\end{array}
\right.$

\end{proposition}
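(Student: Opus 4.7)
The plan is to compute the expected payoff of cooperation (playing $H$) for a type $(\alpha,\beta,C)$ player and compare it to the expected payoff of every possible deviation $s_i' \in \{L, L+1, \ldots, H-1\}$; cooperation is translucently rational iff the cooperative payoff weakly exceeds each deviation payoff. The two bounds in the theorem should emerge as the conditions imposed by the two deviations that turn out to bind.

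As in the proof of Proposition~\ref{prop:PD}, the central observation is that a contemplated deviation lowers the probability the opponent cooperates from $\beta$ to $(1-\alpha)\beta$: with probability $\alpha$ the opponent detects the deviation and defects (plays $L$), and otherwise acts according to the original ``cooperate with probability $\beta$'' belief. The cooperative payoff is therefore $\beta H + (1-\beta)(L-b)$, accounting for a tie at $H$ when the opponent cooperates and for being the high bidder against $L$ otherwise.

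The next step is to compute the deviation payoffs and identify which deviation is most attractive. For $L < s < H$, the deviator is strictly between the two possible opponent plays, so the expected payoff is $(1-\alpha)\beta(s+b) + \bigl(1-(1-\alpha)\beta\bigr)(L-b)$; this is linear and strictly increasing in $s$ (whenever $(1-\alpha)\beta > 0$), so the tightest constraint from this range comes from $s = H-1$. The boundary $s = L$ must be handled separately, because a tie at $L$ yields payoff $L$ rather than $L-b$; its expected payoff is $L + (1-\alpha)\beta b$. Thus only the two deviations to $L$ and to $H-1$ can bind.

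Finally, I would turn each comparison into an inequality in $b$. The comparison with the deviation to $L$ rearranges cleanly to $b \le (H-L)\beta/(1-\alpha\beta)$, giving the first bound. The comparison with the deviation to $H-1$ rearranges to an inequality of the form $b(1-2\alpha) \le X$ where $X \ge 0$ depends on $H-L$, $\alpha$, and $\beta$: when $\alpha \ge \tfrac12$ the left side is non-positive so the inequality is automatic and only the first bound appears, while when $\alpha < \tfrac12$ solving for $b$ gives the second bound in the theorem. The main technical wrinkle is the nonlinearity introduced at $s = L$ by the tie (which is what prevents the best deviation from simply being $H-1$ and makes $L$ an alternative ``extreme'' deviation); the case split at $\alpha = \tfrac12$ is driven by the sign of $1-2\alpha$ appearing as the coefficient of $b$ in the $H-1$ comparison.
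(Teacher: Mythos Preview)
Your proposal is essentially identical to the paper's own proof: both compute the cooperation payoff $\beta H+(1-\beta)(L-b)$, observe that a deviation reduces the opponent's cooperation probability to $(1-\alpha)\beta$, note that among deviations only $x=H-1$ and $x=L$ can bind (by linearity in $x$ for $L<x<H$ and the tie at $x=L$), and then reduce the two resulting inequalities to the stated bounds, with the case split at $\alpha=\tfrac12$ arising from the sign of the coefficient of $b$ in the $H-1$ comparison. Your exposition is slightly more explicit about why only the two extreme deviations matter, but the argument is the same.
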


\begin{proof}
If player $i$ has type $(\alpha,\beta,C)$ and cooperates in Traveler's
Dilemma, then his expected
payoff is $\beta H + (1-\beta)(L-b)$, since player $i$ believes that
$j \ne i$
will cooperate with 
probability $\beta$.  If $i$ deviates and plays $x\ne H$, then $j$ will catch him with probability
$\alpha$ and play $L$.  Recall from the proof of
Proposition~\ref{prop:PD} that, if $i$ deviates, $i$'s belief that $j$
cooperates is $(1-\alpha)\beta$. 
This means that $i$'s expected payoff if he
deviates to $x < H$ is
$(1-\alpha)\beta (x+b) + (1 - \beta + \alpha\beta)(L-b)$ if $x > L$, and 
$(1-\alpha)\beta (L+b) + (1- \beta + \alpha \beta)L = L +
(\beta -\alpha\beta)  b$ if $x = L$.
It is easy to see that $i$ maximizes his expected payoff either if $x =
H-1$ or $x=L$.  Thus, cooperation is a best response if 
$\beta H + (1-\beta)(L-b) \ge \max((1-\alpha)\beta(H +b -1) + (1-
\beta + \alpha \beta)(L-b), L + (\beta -\alpha\beta)  b)$. 
Again, straightforward algebra shows that this condition is equivalent
to the one stated, as desired.
(It is easy to check that if $\alpha \ge 1/2$, then the condition 
$\beta H + (1-\beta)(L-b) \ge (1-\alpha)\beta(H +b -1) + (1- \beta +
\alpha \beta)(L-b) $ is guaranteed to hold, which is why we get the
two cases depending on whether $\alpha \ge 1/2$.)
\end{proof}

Proposition~\ref{prop:TD} shows that as $b$, the punishment/reward,
increases, a player must have greater belief that his opponent is
cooperative and/or a greater belief that the opponent will learn about
his deviation and/or a
greater difference between the high and low payoffs in order to make
cooperation a best response.  (The fact that 
increasing $\beta$ increases $\frac{(H-L)\beta}{1-\alpha\beta}$
follows from straightforward calculus.)

\commentout{
recall the identity

\begin{align}\label{eq:useful identity}
\sum_{k=0}^{n}\binom{n}{k}p^k(1-p)^{N-k}k=pn
\end{align} 
}

We next consider the Public Goods game.  Recall the $\rho$ is 
the marginal return of cooperating.

\begin{proposition}\label{pro:PCG}
In the Public Goods game with $N$ players, it is translucently rational
for a player of type 
$(\alpha,\beta,C)$ to cooperate
if and only if 
$\alpha\beta\rho(N-1) \ge 1 - \rho$.
\end{proposition}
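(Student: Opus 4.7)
The plan is to follow the same pattern as Propositions \ref{prop:PD} and \ref{prop:TD}: compute the expected payoff of a type $(\alpha,\beta,C)$ player $i$ from cooperating, compute the payoff from the best possible deviation, and then solve for the inequality that makes cooperation the best response. Because the Public Goods payoff function is linear in contributions, most of the work reduces to linearity of expectation.

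First I would compute the cooperation payoff. Under $\mu_i^{s_i,s_i}$, each of the $N-1$ other players independently contributes $1$ with probability $\beta$ and $0$ with probability $1-\beta$, so the expected total contribution from others is $(N-1)\beta$. Substituting into $u_i(1,x_{-i}) = 1 - 1 + \rho(1 + \sum_{j\ne i} x_j)$ gives an expected payoff of $\rho + \rho(N-1)\beta$. For a deviation to an arbitrary contribution $x_i' \in [0,1]$, I invoke the observation (already used in the proof of Proposition~\ref{prop:PD}) that each other player now cooperates with the reduced probability $(1-\alpha)\beta$: with probability $\alpha$ a given other player notices the deviation and defects, otherwise she cooperates with her original probability $\beta$. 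Hence the expected deviation payoff is
\[
1 - (1-\rho)x_i' + \rho(N-1)(1-\alpha)\beta.
\]

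Since $\rho < 1$ (as $\rho \in (1/N, 1)$), this expression is strictly decreasing in $x_i'$, so the optimal deviation is $x_i' = 0$, yielding $1 + \rho(N-1)(1-\alpha)\beta$. Cooperation is therefore a best response iff
\[
\rho + \rho(N-1)\beta \;\ge\; 1 + \rho(N-1)(1-\alpha)\beta,
\]
and straightforward algebra reduces this to $\alpha\beta\rho(N-1) \ge 1-\rho$, as claimed. The only substantive step is verifying that the optimal deviation is all the way to $x_i' = 0$ rather than an interior contribution, but this is immediate from linearity and $\rho<1$, so I do not expect any real obstacle.
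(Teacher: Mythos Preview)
Your proof is correct and follows essentially the same approach as the paper's: compute the expected payoff from cooperating using linearity, compute the deviation payoff with the reduced cooperation probability $(1-\alpha)\beta$, observe that $\rho<1$ forces the optimal deviation to be $x_i'=0$, and then simplify the resulting inequality. The only cosmetic difference is that you write the deviation payoff as $1-(1-\rho)x_i'+\rho(N-1)(1-\alpha)\beta$ whereas the paper leaves it as $1-x+\rho((1-\alpha)\beta(N-1)+x)$, but these are algebraically identical.
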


\begin{proof}
Suppose player $i$, of type $(\alpha,\beta,C)$, cooperates.  Since he expects a player to cooperate
with probability $\beta$, the expected number of cooperators among the
other players is $\beta(N-1)$. Since he himself will cooperate, the
total expected number of cooperators is $1+\beta(N-1)$.  
Since $i$'s payoff is $\rho m$ if $m$ players
including him cooperate, and thus is linear in the number of
cooperators, his  expected payoff is exactly his payoff if the
expected number of players cooperate.  
Since his expected payoff with $1 + \beta(N-1)$ cooperators is 
$\rho(1 + \beta(N-1))$, this is his expected payoff if he cooperates.   

On the other hand, if $i$ deviates by contributing $x < 1$, his
expected payoff if $m$ other players cooperate is $(1-x) + \rho
(m+x)$.  
Again, if $i$ deviates, his expected belief that $j$ will cooperate is 
$(1-\alpha)\beta$.
Thus, the expected
number of cooperators is $(1-\alpha)\beta(N-1)$, and
his expected payoff is $1 - x + \rho((1-\alpha)\beta(N-1) + x)$.  
Since $\rho < 1$, he gets the highest expected payoff by defecting
(i.e., taking $x=0$).  

Thus, cooperation is a best response if $\rho(1 + \beta(N-1)) \ge 1 +
\rho(1-\alpha)\beta(N-1)$.    Simple algebra shows that this condition
holds iff $\alpha\beta\rho(N-1) \ge 1-\rho$.
\end{proof}

Proposition~\ref{pro:PCG} shows that if $\rho=1$, then cooperation is certainly
a best response (you always get out at least as much as you contribute).  For
fixed $\alpha$ and $\beta$, there is guaranteed to be an $N_0$ such
that cooperation is a best response for all $N \ge N_0$; moreover, for
fixed $\alpha$, as $N$ gets larger, smaller and smaller $\beta$s are
needed for cooperation to be a best response.

Finally we consider the Bertrand competition. Recall that $H$ is the reservation value and $L$ is the price floor.

\begin{proposition}\label{pro:Bertrand}
In Bertrand Competition, 
it is translucently rational for a player of type $(\alpha,\beta,C)$
to cooperate iff $\beta^{N-1} \ge f(\gamma,N)LN/H$, where 
$f(\gamma,N) = \sum_{k=0}^{N-1}
\binom{N-1}{k}(1-\gamma)^k\gamma^{N-k-1}/(k+1)$ and  
$\gamma = (1-\alpha)\beta$.
\end{proposition}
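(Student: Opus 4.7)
The plan is to compute $i$'s expected utility from cooperating, compare it to the expected utility from every unilateral deviation, identify the binding deviation (which will turn out to be undercutting all the way to $L$), and then rearrange the resulting inequality to recover the stated bound.

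First, if $i$ cooperates (plays $H$), then since each other player is believed to play $H$ with probability $\beta$ and $L$ with probability $1-\beta$, $i$ avoids being undercut exactly when all $N-1$ other players also play $H$. In that event all $N$ firms tie at $H$ and $i$ receives $H/N$; otherwise someone undercuts him and $i$ earns $0$. Hence $\EU_i(\text{cooperate}) = \beta^{N-1}H/N$.

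Second, exactly as in the proof of Proposition~\ref{prop:PD}, once $i$ contemplates deviating his belief that any other player cooperates drops from $\beta$ to $\gamma = (1-\alpha)\beta$. Because $i$'s beliefs put all their mass on $\{L,H\}$ for the others, the only candidate deviations are to $L$ itself or to some $x$ with $L<x<H$. For the latter, $i$ is the unique low bidder exactly when every other player plays $H$ (probability $\gamma^{N-1}$), so the expected payoff is $x\gamma^{N-1}$, maximised at $x=H-1$. For a deviation to $L$, $i$ ties with the $k$ other players who also choose $L$; summing over $k$ yields
\[
L\sum_{k=0}^{N-1}\binom{N-1}{k}(1-\gamma)^k\gamma^{N-1-k}\frac{1}{k+1} \;=\; L\,f(\gamma,N).
\]

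Cooperation is translucently rational iff $\beta^{N-1}H/N$ weakly exceeds both deviation values, and a direct rearrangement of $\beta^{N-1}H/N \ge L\,f(\gamma,N)$ recovers exactly the inequality in the statement. The main obstacle is to check that the $L$-deviation is indeed the binding one, i.e., that $L\,f(\gamma,N) \ge (H-1)\gamma^{N-1}$ whenever cooperation is even a candidate best response; I would attack this via the closed form $f(\gamma,N) = (1-\gamma^N)/(N(1-\gamma))$, exploiting monotonicity in $\gamma$ together with $L\ge 2$ and integrality of prices to reduce the comparison to an elementary inequality on the relevant parameter range.
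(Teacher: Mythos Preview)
Your approach tracks the paper's proof step for step: the cooperation payoff $\beta^{N-1}H/N$, the reduction to the two candidate deviations $H-1$ and $L$, and the expression $Lf(\gamma,N)$ for the $L$-deviation all appear in the paper in exactly the same form. You actually go further in two respects. First, you supply the closed form $f(\gamma,N)=(1-\gamma^N)/(N(1-\gamma))$, which the paper explicitly declines to pursue (``it seems difficult to find a closed-form expression for $f(\gamma,N)$''); your derivation via $\binom{N-1}{k}/(k+1)=\binom{N}{k+1}/N$ is correct. Second, and more importantly, you flag the comparison with the $H-1$ deviation as a genuine obstacle, whereas the paper disposes of it in one unjustified line: ``Since we clearly have $\beta^{N-1}H/N \ge \gamma^{N-1}(H-1)$\ldots''.

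Your instinct that this step needs work is right, but the attack you sketch cannot succeed, because the inequality you are aiming for fails on the relevant range. Take $N=2$, $L=2$, $H=100$, $\alpha=0.1$, $\beta=1$, so $\gamma=0.9$. Cooperation yields $\beta H/N=50$; deviating to $L$ yields $Lf(\gamma,2)=2\cdot 0.95=1.9$; deviating to $H-1=99$ yields $\gamma\cdot 99=89.1$. The stated condition $\beta^{N-1}\ge f(\gamma,N)LN/H$ reads $1\ge 0.038$ and is satisfied, yet cooperation is strictly beaten by the $H-1$ deviation. Thus neither the paper's asserted inequality $\beta^{N-1}H/N\ge\gamma^{N-1}(H-1)$ nor your target $Lf(\gamma,N)\ge(H-1)\gamma^{N-1}$ holds here, and the ``iff'' in the proposition fails in this regime. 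The gap you identified is real, but it lies in the statement rather than in your strategy: when $\alpha$ is small the $H-1$ deviation genuinely requires its own constraint.
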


\begin{proof}
\commentout{
One has 
$$
EU_1((H,L,\ldots,L),(1,0,\ldots,0))=\frac{p_{-1}^{N-1}H}{n}
$$
Similar to the previous proposition, we then find

$$
EU_1(\omega,L)=

$$

$$
\sum_{r=0}^{N-1}\sum_{t=0}^{r}\sum_{w=0}^{N-1-r}\binom{N-1}{r}\binom{r}{t}\binom{N-1-r}{w}p_{-1}^{t+w}(1-p_{-1})^{N-1-t-w}\alpha_1^{N-1-r}(1-\alpha_1)^{r}\frac{L}{N-t}
$$

This time, this expression is not easily simplificable (using wolfram). But we can complete the proof in the following way: fixing $t=N-1$ and $w=0$ we obtain

$$
EU_1(\omega,D)\geq p_{-1}^{N-1}L\sum_{r=0}^{N-1}\binom{N-1}{r}\alpha_1^{N-1-r}(1-\alpha_1)^r=p_{-1}^{N-1}L
$$

Consequently, for $N$ large enough , we certainly have
$EU_1(\omega,D)>EU_1(\omega)$. 
}
Clearly, if player $i$ cooperates, then his expected payoff is
$\beta^{N-1}H/N$, since he gets $H/N$ if everyone else cooperates
(which happens with probability $\beta^{N-1}$), and otherwise gets
$0$.

Let $\gamma = (1-\alpha)\beta$.  Again, this is the probability that
$i$ ascribes to another player playing $H$ if he deviates.
If $i$ deviates, then it is easy to see (given his beliefs) that the
optimal choices for deviation are $H-1$ and $L$.  In the former case, 

$i$'s expected payoff is $\gamma^{N-1}(H-1)$.  
In the latter case, 
$i$'s expected payoff is $\sum_{k=0}^{N-1}\binom{N-1}{k} (1-\gamma)^k\gamma^{N-k-1} L/(k+1)$:
with probability  $(1-\gamma)^k\gamma^{N-k-1}$, exactly $k$ other
players will play $L$, and $i$'s payoff will be $L/(k+1)$.  Moreover, each possible subset of $k$ defectors, has to be count $\binom{N-1}{k}$ times.
Let $f(\gamma,N) = \sum_{k=0}^{N-1}\binom{N-1}{k} (1-\gamma)^k\gamma^{N-k-1}/(k+1)$.
Note that, as the notation suggests, this expression depends only on
$\gamma$ and $N$ (and not any of the other parameters of the game).
Thus, $i$'s expected payoff in this case is $f(\gamma,N)L$, so 
cooperation is a best response iff 
$\beta^{N-1}H/N \ge \max(\gamma^{N-1}(H-1), f(\gamma,N)L)$.
 While it seems
difficult to find a closed-form expression for $f(\gamma,N)$,
this does not matter for our purposes.%
\footnote{Note that the expected value of $L/(k+1)$ cannot be computed by
plugging in the expected value of $k$, in the spirit of our earlier
calculations, since $L/(k+1)$ is not linear in $k$.} 
%
Since we clearly have $\beta^{N-1}H/N \ge \gamma^{N-1}(H-1)$,
cooperation is a best response iff $\beta^{N-1}H/N \ge f(\gamma,N)L$,
or, equivalently, $\beta^{N-1} \ge f(\gamma,N)LN/H$,
\commentout{
In the latter case, 
$i$'s expected payoff is $\frac{L}{(1-\gamma)(N-1)+1}$, since the expected number of players other than $i$ playing $L$ is $(1-\gamma)(N-1)$.
Thus, cooperation is a best response iff 
$\beta^{N-1}H/N \ge \max(\gamma^{N-1}(H-1), \frac{L}{(1-\gamma)(N-1)+1})$.
Since we clearly have $\beta^{N-1}H/N \ge \gamma^{N-1}(H-1)$,
cooperation is a best response iff $\beta^{N-1}H/N \ge \frac{L}{(1-\gamma)(N-1)+1}$,
or, equivalently, $\beta^{N-1} \ge \frac{LN}{H((1-\gamma)(N-1)+1)}$,} 
\end{proof}

Note that $f(\gamma,N) = \sum_{k=0}^{N-1}\binom{N-1}{k} (1-\gamma)^k\gamma^{N-k-1}/(k+1) \ge 
\sum_{k=0}^{N-1}\binom{N-1}{k} (1-\gamma)^k\gamma^{N-k}/N = 1/N$, so 
Proposition~\ref{pro:Bertrand} shows cooperation is irrational if
$\beta^{N-1} < L/H$.
Thus, while cooperation may be achieved for reasonable values of $\alpha$ and 
$\beta$ if $N$ is small, a
player must be more and more certain of cooperation in order to
cooperate in Bertrand Competition as the number of players
increases.  Indeed, for a fixed type $(\alpha,\beta,C)$, there exists
$N_0$ such that cooperation is not a best response for all $N \ge N_0$.
Moreover, if we fix the number $N$ of players, more values of $\alpha$
and $\beta$ allow cooperation as $L/H$ gets smaller.  In particular,
if we fix $H$ and raise the floor $L$, fewer values of $\alpha$ and
$\beta$ allow cooperation.

While Propositions~\ref{prop:PD}--\ref{pro:Bertrand} are suggestive, 
we need to make extra assumptions to use these propositions to make
predictions.  A simple assumption that suffices is that there are a
substantial number of translucently rational players whose types have the form
$(\alpha,\beta,C)$, and for each pair  $(u,v)$ and $(u',v')$ of open
intervals in $[0,1]$, there is a positive probability of finding
someone of type $(\alpha,\beta,C)$ with $\alpha \in (u,v)$ and $\beta
\in (u',v')$.  With this assumption, it is easy to see that all
the regularities discussed in Section~\ref{se:social dilemmas model}
hold.

\section{Discussion}\label{sec:discussion}

We have presented an approach that explains a number of well-known
observations regarding the extent of cooperation in social dilemmas.  
In addition, our approach can also be applied to explain the apparent
contradiction that people cooperate more in a one-shot Prisoner's
dilemma when they do not know the other player's choice than when they do.
In the latter case, Shafir and Tversky
\citeyear{ShafirTversky92} found that most people (90\%) defect, while
in the former case, only 63\% of people defect. Our model of
translucent players predicts this behavior: if player 1 knows player 2
choices, then there is no translucency and thus our model predicts
that player 1 defects for sure. On the other hand, if player 1 does
not know player 2's choice and believes that he is to some extent
translucent, then, as shown in Proposition \ref{prop:PD}, he may be
willing to cooperate. Seen in this light, our model can also be
interpreted as an attempt to formalize \emph{quasi-magical thinking}
\cite{ShafirTversky92}, the kind of reasoning that is supposed
to motivate those people who believe that the others' reasoning is
somehow influenced by their own thinking, even though they know that
there is no causal relation between the two. Quasi-magical thinking
has also been formalized by Masel \citeyear{Masel} in the context of
the Public Goods gam and by Daley and Sadowski
\citeyear{DaleySadowski} in the context of symmetric $2\times 2$
games.  The notion of translucency goes beyond these models,
since it may applied to a much larger set of games. 

Besides a retrospective explanation, our model makes new predictions for
social dilemmas which,
to the best of our knowledge, have never been tested in the lab. 
In particular, 
it predicts that 
\begin{itemize}
\item the degree of cooperation in Traveler's dilemma increases as the
  difference $H-L$ increases; 
\item for fixed $L$ and $N$, the degree of cooperation in Bertrand
  Competition increases as $H$ increases, and what really matters is
  the ratio $L/H$.  
\end{itemize}

Clearly much more experimental work needs to be done to validate
the approach.  
For one thing, it is important to understand the predictions it makes 
for other social dilemmas and for games that are not social
dilemmas.  Perhaps even more important would be to see if we can
experimentally verify that people believe that they are to some extent
translucent, and, if so, to get a sense of what the value of $\alpha$
is.  In light of the work on watching eyes mentioned in the
introduction, it would also be interesting to know what could be done
to manipulate the value of $\alpha$.

\commentout{
Do we really need to assume that the game has only one Nash
equilibrium? As we mentioned in Section \ref{sec:explanation}, our
results would have been qualitatively the same if we had not assumed
that players believe that the other players will either cooperate or
defect. As a consequence of this, our results would be qualitatively
the same if we had more Nash equilibria, since we may assume that each
strategy that is part of a Nash equilibrium is played with some
probability. For instance, in the Bertrand Competition with $L=1$,
both $s=1$ and $s=2$ are part of a Nash equilibrium. We may assume
that player $i$ believes that player $j$ cooperates with probability
$\beta$ and plays $s=1$ with probability $\beta_1$ and $s=2$ with
probability $\beta_2$ ($\beta_1+\beta_2=1-\beta$).  
}

As we mentioned, there have been many other attempts to explain
cooperation in social dilemmas, especially recently.  
Most of other approaches that we are aware of are not able to obtain
all the regularities that we have mentioned. 
\begin{itemize}
\item The Fehr and Schmidt \citeyear{Fe-Sc} inequity-aversion model
assumes that subjects play a Nash equilibrium of a modified game, in
which players do not only care about their monetary payoff, but also
they care about equity. Specifically, player $i$'s utility when
strategy $s$ is played is assumed to be 
\newcommand{\alphaEF}{a^{FS}}
\newcommand{\betaEF}{b^{FS}}
$U_i(s)=u_i(s)-\frac{\alphaEF_i}{N-1}\sum_{j\neq
  i}\max(u_j(s)-u_i(s),0)-\frac{\betaEF_i}{N-1}\sum_{j\neq
  i}\max(u_i(s)-u_j(s),0)$, where $u_i(s)$ is the material payoff of
player $i$, and $0\leq\betaEF_i\leq\alphaEF_i$ are individual parameters,
where $\alphaEF_i$
represents the extent to which player $i$ is averse to inequity in
favor of others, and $\betaEF_i$ represents his
aversion to inequity in his favor.
\commentout{
It is easy to check that if the parameters $\alphaEF_i$ and $\beta_i$
are such that $U_i(C,C,\ldots,C)\geq U_i(C,\ldots,C,D,C,\ldots,C)$ for
some set of $N$ players, then this inequality holds for all players.
It follows that, if we fix a pool of $n$
players and we randomly extract two samples of $N_1$ and $N_2$
subjects and have them play the PGG, then mutual cooperation is an
equilibrium in the first group if and only if it is in the second
group.  Consequently, Fehr and Schmidt's model does not predict an 
effect of the group size on cooperation in the Public Goods game. 
Consider the 2-player Bertrand Competition.  The payoff for both
players of $(k,k)$ is $k/2$.  If $k > L$ and player $i$ deviates to
$k-1$, then his 
payoff is $(1-\betaEF_i)(k-1)$.  Thus, $(k,k)$ is an equilibrium in
the Fehr-Schmidt model iff $k=L$ or  $k/2 \ge \max_{i=1,2}(1-\betaEF_i)(k-1)$.  
Easy algebra shows that the latter condition holds iff 
$(1/2 - \betaEF_i)k \le  1-\betaEF_i$ for $i = 1,2$.  Thus, if
$\betaEF_i \ge 1/2$ for $i=1,2$, then $(k,k)$ is an equilibrium for
all $k$.  If $\betaEF_i < 1/2$ for some $i$ then we may get an
equilibrium for small values of $k$, but the equilibrium for $k > L$
will disappear once the floor is raised sufficiently high.)}
Consider the Public Goods game with $N$ players. The strategy profile
$(x,\ldots,x)$, where all players contribute $x$ gives player $i$ a
utility of $(1-x) + \rho Nx$.  If $x > 0$ and player $i$ contributes
$x' < x$, then 
his payoff is $(1-x') + \rho((N-1)x + x') - \betaEF_i \rho (x-x')$.
Thus, $(x,\ldots, x)$ is an equilibrium if $\betaEF_i \rho (x-x') \ge
(1-\rho)(x-x')$, that is, if $\betaEF_i \ge (1-\rho)/\rho$.  Thus,
if $\betaEF_i \ge (1-\rho)/\rho$ for all players $i$, then $(x,\ldots,
x)$ is an equilibrium 
for all choices of $x$ and all values of $N$. While there may be other
pure and mixed strategy equilibria, it is not
hard to show that if $\betaEF_i < (1-\rho)/\rho$, then player $i$ will
play 0 in every equilibrium (i.e., not contribute anything). 
As a consequence, assuming, as in our model, that players believe that
there is a probability $\beta$ that other agents will cooperate and
that the other agents either cooperate or defect, Fehr and Schmidt
\citeyear{Fe-Sc} model does not make any clear prediction of a 
group-size effect on cooperation in the public goods game.

\item  McKelvey and Palfrey's \citeyear{MK-Pa95} \emph{quantal response
  equilibrium (QRE)} is defined as follows.%
\footnote{We actually define here   a particular instance of QRE
  called the \emph{logit QRE}; $\lambda$ is a free parameter of this model.}
Taking $\sigma_i(\ssigma)$ to be the probability that mixed strategy
$\sigma_i$ assigns to the pure strategy $\ssigma$, given $\lambda>0$, 
a mixed strategy profile $\sigma$ is a QRE if, for each player $i$, 
$\sigma_i(\ssigma) = \frac{e^{\lambda 
      EU_i(\ssigma,\sigma_{-i})}}{\sum_{s_i'\in S_i}e^{\lambda
    EU_i(s_i',\sigma_{-i})}}$. 

To see that QRE does not describe human behaviour well in social
dilemmas, observe that in the Prisoner's Dilemma, for all choices of
parameters $b$ and $c$ in the game, all choices of the
parameter $\lambda$, all players $i$, and all (mixed) strategies $s_{-i}$ of player
$-i$, we have $EU_i(C,s_{-i})<EU_i(D,s_{-i})$. Consequently, whatever
the QRE $\sigma$ is, we must have 
$\sigma_i(C)<\frac{1}{2}<\sigma_i(D)$, that is, QRE predicts
that the degree of cooperation can never be larger than 50\%. However,
experiments show that we can increase the benefit-to-cost ratio so as to reach
arbitrarily large degrees of cooperation (close to 80\% in \cite{CJR}
with $b/c=10$). 

\item \emph{Iterated regret minimization} \cite{HP11b} does not make appropriate
predictions in Prisoner's Dilemma and the Public Goods game, because
it predicts that if there is a dominant strategy then it will be
played, and in these two games, playing the Nash equilibrium is the
unique dominant strategy. 
\item Capraro's \citeyear{Ca} notion of \emph{cooperative equilibrium}, 
while correctly predicting the effects of
the size of the group on cooperation in the Bertrand Competition and
the Public Goods game \cite{BarceloCapraro}, fails to predict
the negative effect of the price floor on cooperation in the Bertrand
Competition.  
\item Rong and Halpern's \citeyear{RH} notion of \emph{cooperative
  equilibrium} (which is different from that of Capraro \citeyear{Ca})
focuses on 2-player games.  However, the definition for
  games with greater than 2 players does not predict the decrease in
  cooperation as $N$ increases in Bertrand Competition, nor the
  increase as $N$ increases in the Public Goods Game.
\end{itemize}
\newcommand{\alphaCR}{a^{CR}}
\newcommand{\deltaCR}{b^{CR}}
The one approach besides ours that we are aware of that obtains all the 
regularities discussed above is that of 
Charness and Rabin \citeyear{Ch-Ra}.  Charness and Rabin, like Fehr
and Schmidt \citeyear{Fe-Sc},
assume that agents play a Nash equilibrium of a modified game, where
players care not only about their personal material payoff, but 
also about the social welfare and the outcome of the least
fortunate person. Specifically, player $i$'s utility is assumed to be
$(1-\alphaCR_i)u_i(s)+\alphaCR_i(\deltaCR_i\min_{j=1,\ldots,N}u_j(s)+(1-\deltaCR_i)\sum_{j=1}^Nu_j(s))$.
Assuming, as in our model, that agents believe that other players
either cooperate or defect and that they cooperate with probability
$\beta$, then it is not hard to see that Charness and Rabin
\citeyear{Ch-Ra} also predict all the regularities that we have been
considering.  
\commentout{
Consider the 2-player Bertrand Competition.  If $(k,k)$ is played, then
player $i$ gets a utility of $(1-\alphaCR_i)k/2 + 
 \alphaCR_i (\deltaCR_i k/2 + (1-\deltaCR_i) k)$; if $k > L$ and 
player $i$ deviates to $k-1$, then his utility is 
$(1-\alphaCR_i)(k-1) +  \alphaCR_i (1-\deltaCR_i k) (k-1)$.
Straightforward manipulations show that $(k,k)$ is an equilibrium if 
\begin{equation}\label{eqCR}
k(1-\alphaCR_i-\alphaCR_i\deltaCR_i) \le 1-\alphaCR_i\deltaCR_i
\end{equation}
for $i = 1, 2,$. It follows that if $1
\le\alphaCR_i-\alphaCR_i\deltaCR_i$, then $(k,k)$ is an
equilibrium for all $k$.  However, if $1
> \alphaCR_i-\alphaCR_i\deltaCR_i$, then $(k,k)$ is an equilibrium
only for sufficiently small $k$, those for which (\ref{eqCR}) holds.
Though there may be other equilibria,
this can be viewed as consistent with the observation that we see less
cooperation as $L$ increases.  
}

Although it seems difficult to distinguish our model from that of 
Charness and Rabin \citeyear{Ch-Ra} if we consider only social
dilemmas, they are distinguishable if we look at other settings and take
into account the other reason we mentioned for translucency: that
other people in their social group might discover how they acted.
We can easily capture this in the framework we have been considering
by doubling the number of agents; for each player $i$, we add
another player $i^*$ that represent's $i$'s social network.  Player
$i^*$ can play only two actions: $n$ (for ``did \emph{n}ot observe
player $i$'s action) and $o$ (for ``\emph{o}bserved player $i$'s
action'').%
\footnote{Alternatively, we could take player $i$'s payoff to depend
  on the state of the world, where the state would model whether or
  not player $i$'s action was observed.}  The payoffs of these new
players are irrelevant.  Player $i$'s payoff depends on the action of
player $i^*$, but not on the actions of player $j^*$ for $j^* \ne
i^*$.  Now player $i$ must have a prior probability $\gamma_i$ about
whether his action will 
be observed; in a social dilemma, this probability might increase to
$\gamma_i' \ge \gamma_i$ if he intends to cooperate but instead
deviates and defects.  
It should be clear that, even if $\gamma_i' = \gamma_i$,
if we assume that player $i$'s utilities are significantly lower if
his non-cooperative action is observed, with this framework we would
get qualitatively 
similar results for social dilemmas to the ones that we have already
obtained. 
The advantage of taking into account what your social group thinks is
that it can be applied even to single-player games like the
Dictator Game \cite{KKT86}.  To do so, we would need to think about what
a player's utility would be if his social group knew the extent to which
he shared the pot.  But it should be clear that reasonable assumptions
here would lead to some degree of sharing.  

While this would still not distinguish our predictions from those of the
Charness-Rabin model, there is a variant of the Dictator Game
considered by Capraro \citeyear{Capraro14} that does allow us to
distinguish between the two.  In this game,
there are only two possible allocations of money: either the agent
gets $x$ and the other players gets $-x$, or the 
other player gets $x$ and 
the agent gets $-x$.  In this game, the Charness-Rabin approach would
predict that the agent will choose to keep the $x$.  But the translucency
approach would allow that there would be types of agents who would
think that their social group would approve of them giving away $x$, so,
if the action was observed by their social group, they would get high
utility by giving away $x$.  And, indeed, Capraro's results show that 
a significant fraction (25\%) of people do choose to give away $x$.

Of course, we do not have to assume $\alpha > 0$ to get cooperation in 
social dilemmas such as 
Traveler's Dilemma or Bertrand Competition.  But 
we do if we want to consider what we believe is the appropriate
equilibrium notion.   
Suppose that 
rational
players are chosen at random from a population and play a social
dilemma.  Players will, of course, then update their beliefs about the
likelihood of seeing cooperation, and perhaps change their strategy as
a consequence. Will these beliefs stabilize and the strategies played
stabilize?  By \emph{stability} here, we mean that
(1) players are all best responding to their beliefs, 
and (2) players' beliefs about the strategies played by
others are correct: if player $i$ ascribes probability $p$ to player
$j$ playing a strategy $\ssigma_j$, then in fact a proportion $p$ of
players in the population play $\ssigma_j$.   
We have deliberately been fuzzy here about whether we mean best
response in the sense of Definition~\ref{brstandard} or 
Definition~\ref{brdefinitionnew}.  If we use
Definition~\ref{brstandard} (or, equivalently use 
Definition~\ref{brdefinitionnew} and take $\alpha = 0$), then 
it is easy to see (and
well known) that the only way that this can happen is if the
distribution of strategies played by the players represents a mixed
strategy Nash equilibrium.  On the other hand, if $\alpha > 0$
and we use  Definition~\ref{brdefinitionnew}, then
we can have stable beliefs that accurately reflect the strategies used
and have cooperation (in all the other social
dilemmas that we have studied).  We make this precise in the appendix
using the framework of Halpern and Pass \citeyear{HaPa13}, 
by defining a notion of \emph{translucent equilibrium}.  Roughly
speaking, we construct a model where, at all states, players are 
translucently rational (so we have common belief of translucent
rationality), the strategies 
used are common knowledge, and we nevertheless
have cooperation at some states.  
Propositions~\ref{prop:PD}--\ref{pro:Bertrand} play a key role in
this construction; indeed, as long as the strategies used satisfy the 
constraints imposed by these results, we get a translucent equilibrium.


We have not focused on translucent equilibrium here
in the main text because it makes
strong assumptions about players' rationality and beliefs (e.g., it
implicitly assumes common belief of translucent rationality).  We do
not need such strong assumptions for our results.

\commentout{
Another issue that we have not addressed here is a solution concept
based on translucency.  This becomes particularly relevant if a game
is played repeatedly and most of the players are rational.  For the
rest of this discussion, we assume that all players are rational, just
to simplify this presentation.

 Suppose that, at each round, players are chosen at random from a
 population and play a social dilemma.
Players will, of course, then update their beliefs about the
likelihood of seeing cooperation.  Will these beliefs stabilize after
a while, so that all players are players have the same beliefs and are
playing a best response to their beliefs, under the assumption that
$\alpha = 0$ (so that the notion of best response is that given in 
Definition~\ref{brstandard})?  
}


\appendix

\section{Translucent equilibrium}
In the main text of this paper we have described how cooperation can
be rational if players are translucent, that is, if they believe that
if they switch from one strategy to another, the fact that they choose
to switch may be visible to the other players. In this appendix, we
show how to use counterfactual structures to define a notion of
equilibrium with translucent players and we observe that rationality
of cooperation shown in the main text corresponds to having a mixed
strategy translucent equilibrium, where cooperation is played with
non-zero probability.   
We start by reviewing the relevant definitions from \cite{HaPa13}.

\subsection{Game theory with translucent players}

Let $\G=\G(P,S,u)$ be a (finite) normal form game, where
$P=\{1,\ldots,N\}$ is the set of players, each of which has finite
pure strategy set $S_i$ and utility function $u_i$. 

\begin{definition}\label{defin:counterfactual structure}
{\rm A finite counterfactual structure appropriate for the game
$\mathcal G$ is a tuple $M = (\Omega, \mathbf{s}, f, \PR_1,\ldots,\PR_N)$,
where: 
\begin{itemize}
\item $\Omega$ is a finite space of states;
\item $\mathbf{s}:\Omega\to \SSigma$ is the function that associates to each state $\omega$ the strategy profile that is supposed to be played at $\omega$;
\item $f$ is the closest-state function, which describes what would
happen if player $i$ switched strategy to $s_i'$ at state $\omega$. Thus, $f:\Omega\times P\times S_i\to\Omega$ has to verify the following properties:
\begin{itemize}
\item[CS1.] $\mathbf{s}_i(f(\omega,i,\ssigma'))=\ssigma_i'$;
\item[CS2.] $f(\omega,i,\mathbf{s}_i(\omega))=\omega$.  
\end{itemize}
Property CS1 assures that, at state
$f(\omega,i,\ssigma_i')$, player $i$ plays $\ssigma_i'$, and Property CS2 assures that the state does not change if player $i$ does not change strategy.
\item $\PR_i$ are player $i$'s beliefs, which depends on the state $i$ is reasoning about. Specifically, for each $\omega\in\Omega$, $\mathcal P\mathcal R_i(\omega)$ is a
probability measure on $\Omega$ satisfying the following properties:
\begin{itemize}
\item[PR1.] $\PR_i(\omega)(\{\omega'\in\Omega :
\mathbf{s}_i(\omega')=\mathbf{s}_i(\omega)\})=1$ (where
$\strat_i(\omega)$ denotes player $i$'s strategy in $\strat(\omega)$);
\item[PR2.] $\PR_i(\omega)(\{\omega'\in\Omega :
\PR_i(\omega')=\PR_i(\omega)\})=1$.
\end{itemize}
These assumptions guarantee that player $i$ assigns probability $1$ to
his actual strategy and beliefs.  \wbox
\end{itemize}
}
\end{definition}


We can now define $i$'s beliefs at
$\omega$ if he were to 
switch to strategy $\ssigma'$. Intuitively, if he were to switch to strategy $\ssigma'$ at
$\omega$, the probability that $i$ would assign to state $\omega'$ is
the sum of the probabilities that he assigns to all the states
$\omega''$ such that he believes that he would move from $\omega''$ to
$\omega'$ if he used strategy $\ssigma'$. Thus we define
$$\mathcal P\mathcal R_{i,\ssigma'}(\omega)(\omega'):=\sum_{\{\omega'' :
f(\omega'',i,\ssigma')=\omega'\}}\mathcal P\mathcal R_i(\omega)(\omega''). 
$$

We define the expected utility of player $i$ at state $\omega$ in the
usual way as the sum of the product of his expected utility of the
strategy profile played at each state $\omega'$ and the
probability of $\omega'$:
$\EU_i(\omega)=\sum_{\omega'\in\Omega}\mathcal P\mathcal
R_i(\omega)(\omega')u_i(\mathbf{s}_i(\omega),\mathbf{s}_{-i}(\omega')).$%
\protect{\footnote{Given a profile $t = (t_1, \ldots, t_N)$, as usual, we define
$t_{-i}=(t_1,\ldots,t_{i-1},t_{i+1},\ldots,t_N)$.  We extend this
notation in the obvious way to functions like $\strat$, so that, for
example, 
$\strat_{-i}(\omega) = (\strat_1(\omega), \ldots, \strat_{i-1}(\omega),
\strat_{i+1}(\omega),\ldots, \strat_{n}(\omega))$.}}


Now we define $i$'s expected utility at $\omega$ if he were
to switch to $\ssigma'$. The usual way to do so is to simply replace $i$'s actual strategy at
$\omega$ by $\ssigma'$ at all states, keeping the strategies of the other
players the same; that is,
$$\sum_{\omega'\in\Omega}\PR_i(\omega)(\omega')u_i(\ssigma',\mathbf{s}_{-i}(\omega')).$$
In this definition,  player $i$'s beliefs about the strategies that the
other players are using do not change when he switches from
$\strat_i(\omega)$ to $\ssigma'$.  The key point of counterfactual
structures is that these beliefs may well change. Thus, we define $i$'s
expected utility at $\omega$ if he switches to $\ssigma'$ as
$$
\EU_i(\omega,\ssigma')=\sum_{\omega'\in\Omega}\mathcal
\PR_{i,\ssigma'}(\omega)(\omega')u_i(\ssigma',\mathbf{s}_{-i}(\omega')).  
$$

Finally, we can define rationality in counterfactual structures using
these notions: 
\begin{definition}\label{def:rationality}
{\rm Player $i$ is \emph{rational} at state $\omega$ if, for all
$\ssigma' \in \SSigma_i$,
$$\EU_i(\omega) \ge \EU_i(\omega,\ssigma').$$
\wbox
}
\end{definition}

\subsection{Translucent equilibrium}\label{se:equilibria}
In this section, we define translucent equilibrium and we observe that
the results reported in the main text imply that social dilemmas have
a counterfactual structure according to which each player plays, in
equilibrium, his part of the welfare maximizing strategy with non-zero
probability.

We start with some preliminary notation. Given a probability measure
$\tau$ on a finite set $T$, let 
$\text{supp}(\tau)$ denote the support of $\tau$, that is, $\text{supp}(\tau)=\{t\in T : \tau(t)\neq0\}$. Given a mixed strategy profile $\sigma$, note that $\sigma_{-i}$ can 
can be viewed as a probability on
$S_{-i}$, where
$\sigma_{-i}(s_{-i})=\prod_{j\neq i}\sigma_{j}(s_j)$.
Similarly $\sigma$ can be viewed as a probability measure on $S$.
In the sequel, we view $\sigma_{-i}$ and $\sigma$ as probability
measures without further comment (and so talk about their support).

\begin{definition}\label{def:equilibrium}
{\rm A strategy profile $\sigma$ in a game $\G$ is 
\emph{translucent equilibrium} in a counterfactual structure $M =
(\Omega, \mathbf{s}, f, \PR_1,\ldots,\PR_N)$ 
appropriate for $\G$ if there exists a subset $\Omega' \subseteq \Omega$
such that, for each state $\omega$ in $\Omega'$, the following
properties hold:
\begin{enumerate}
\item[TE1.] $\mathbf{s}(\omega)\in\text{supp}(\sigma)$;
\item[TE2.] $\text{supp}(\PR_i(\omega))\subseteq\Omega'$;
\item[TE3.] $\mathbf{s}_{-i}(\PR_i(\omega))=\sigma_{-i}$ (i.e.,
for each strategy profile $s_{-i} \in S_{-i}$, we have 
$\sigma_{-i}(s_{-i}) = \PR_i(\omega)(\{\omega': \strat_{-i}(\omega') =
s_{-i}\})$). 
\item[TE4.] each player is rational at $\omega$.
\end{enumerate}
The mixed strategy profile $\sigma$ is a translucent equilibrium of $\G$
if there exists a counterfactual structure $M$ appropriate for $\G$ such
that $\sigma$ is a translucent equilibrium in $M$.}  \wbox
\end{definition}
Intuitively, $\sigma$ is a translucent equilibrium in $M$ if, for each
strategy $s_i$ in the support of $\sigma_i$, the expected utility of
playing $s_i$ given that other players are playing according to
$\sigma_{-i}$ is at least as good as switching to some other strategy
$s_i'$, given what $i$ would believe about what strategies the other
players are playing if he were to switch to $s_i'$.  

This notion of translucent equilibrium is closely related to a
condition called \emph{IR} (for \emph{individually rational}) by
Halpern and Pass \citeyear{HaPa13}.  The main difference is that
Halpern and Pass considered only pure strategy profiles; we allow
mixed-strategy profiles here.  We discuss the relationship 
between the notions at greater length in Section~\ref{sec:charte}.

\commentout{
Let now $\G$ be a social dilemma and recall that 
$s^N$ and $s^W$ denote respectively the unique Nash equilibrium and
the unique welfare-maximizing profile. 
Let $\G$ be a social dilemma and recall that 
$s^N$ and $s^W$ denote the unique Nash equilibrium and
the unique welfare-maximizing profile, respectively. 
We define a parameterized family of counterfactual structures
$M^\G(\sigma,\alpha) = 
(\Omega,\strat,\PR_1,\ldots,\PR_N)$, where $\alpha = 
(\alpha_1,\ldots, \alpha_N)$, $\alpha_i \in [0,1]$.
Intuitively, $\alpha_j$ describes how likely player $j$ is to
learn about a deviation (i.e., how ``translucent'' the other players
are to $j$) and $\sigma=(\sigma_{-i})_{i}$ describes what player $i$ believes the other players are going to do. We assume that if $j$ detects a deviation on the part of
any player, then she will play $s_j^N$. We also assume that the beliefs are supported on profiles of strategies $(s_1,\ldots,s_N)$, where $s_i\in\{s_i^N,s_i^W\}$. Moreover, we assume that $\sigma_{-i}$ is a power measure, in the sense that, for every $i,j$, with $i\ne j$, player $i$ believes that Player $j$ will be playing the strategy $\sigma_{i,j}:=p_{i,j}s_{j}^W+(1-p_{i,j})s_j^N$, where 
\begin{enumerate}
\item $p_{i,j}=:p_{-i}$ does not depend on $j$.
\item $\sigma_{-i}=\otimes_{j\ne i}\sigma_{i,j}$.
\end{enumerate} 
We define
$M^\G(\sigma,\alpha)$ as follows.
\begin{itemize}
\item $\Omega = S\times \{0,1\}^{n}$.  (The second component of the
state, which is an element of $\{0,1\}^{n}$, is used to determine the
closest-state function.  Roughly speaking, if $v_j = 1$, then player
$j$ learns  about a deviation if there is one; if $v_j = 0$, he does not.)
\item $\strat((s,v)) = s$.
\item $f((s,v),i,s_i^*) = 
\left\{
\begin{array}{ll}
(s,v) &\mbox{if $s_i = s_i^*$,}\\
(s',v) &\mbox{if $s_i \ne s_i^*$, where $s'_i = s_i^*$ and for $j
\ne i$,}\\
&\mbox{$s'_j = s_j$ if $v_j = 0$ and $s'_j = s^N_j$ if $v_j = 1$.}
\end{array}
\right.$

\noindent Thus, if player $i$ changes strategy from $s_i$ to $s_i'$, $s_i'\neq
 s_i$, then each other player $j$ either deviates to his component of
the Nash equilibrium or continues with his current strategy, 
depending on whether $v_j$ is 0 or 1. Roughly speaking, he switches to his
component of the Nash equilibrium if he learns about a deviation
(i.e., if $v_j = 1$).
\item $\PR_i(s,v)(s',v') =\left\{
\begin{array}{lll}
0 &\mbox{if $s_i \ne s'_i$ or $v_i \ne v_i'$,}\\
\sigma_{-i}(s'_i)\Pi_{\{j \ne i: v_j = 1\}}\alpha_j\Pi\, _{\{j \ne
i: \, v_j = 0\}}(1-\alpha_j) &\mbox{otherwise.}
\end{array}
\right.$
\noindent Thus, the probability of the $s'$ component of the state is
determined 
by $\sigma$ (with the standard assumption that player $i$ knows his
own strategy).  The probability of the $v'$ component is determined by
assuming that each player $j$ independently learns about a deviation
by $i$ with probability $\alpha_j$.\footnote{For simplicity, we are
  assuming that the probability that $j$ learns about a deviation by
  $i$ is the same for each player $i$.  More generally, we could have
  a probability $\alpha_{ji}$ for the probability that $j$ learns
  about a deviation by $i$.  Nothing significant would change in our analysis.}
\end{itemize}

Consider now the Prisoner's dilemma and, using a notation consistent with the main text, assume that Player 1 believes that Player 2 plays the strategy $\beta_2C+(1-\beta_2)D$ and that Player 2 believes that Player 1 plays the strategy $\beta_1C+(1-\beta_1)D$. If $\alpha_ib\beta_{-i}\geq c$, for every $i$, then Proposition \ref{prop:PD} implies that cooperation is rational for both players. Thus, if we define $\Omega'=\{((C,C),(1,1)),((C,D),(1,1)),((D,C),(1,1)),((D,D),(1,1))\}$, then all conditions in Definition \ref{def:equilibrium} are satisfied with $\sigma=(\beta_1,\beta_2)$ and so it is a translucent equilibrium. Similarly, it is not hard to see how the other propositions discussed in the main text imply that there are mixed translucent equilibria where each player plays his part of the welfare-maximizing strategy with non-zero probability.
}

\subsection{Characterization of translucent equilibria}\label{sec:charte}

While it is easy to see that every Nash equilibrium is a translucent
equilibrium (see Proposition \ref{prop:nashistranslucent}), the
converse is far from true.  As we show, for example, cooperation can
be an equilibrium in social dilemmas (see below and Section~\ref{sec:teinsd}). 
In this section, we provide a characterization of translucent
equilibria that will prove useful when discussing social dilemmas.


\begin{proposition}\label{prop:nashistranslucent}
 Every Nash equilibrium of $\G$ is a translucent
equilibrium.
\end{proposition}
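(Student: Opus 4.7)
The plan is to exhibit, for any Nash equilibrium $\sigma$ of $\G$, a counterfactual structure $M_\sigma$ in which $\sigma$ satisfies TE1--TE4. The guiding idea is that a Nash equilibrium is exactly a translucent equilibrium relative to an \emph{opaque} closest-state function: one for which a unilateral deviation by $i$ does not alter the strategies $i$ ascribes to the other players. Under such an $f$, the translucent best-response condition of Definition~\ref{brdefinitionnew} collapses to the classical best-response condition of Definition~\ref{brstandard}, which the Nash property supplies.

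Concretely, I would take $\Omega = S$ (all pure strategy profiles), set $\strat(s) = s$, and define $f((s_i,s_{-i}),i,s_i') = (s_i',s_{-i})$ so that $f$ only overwrites player $i$'s component. For the beliefs, I would let $\PR_i(s)$ assign probability $\sigma_{-i}(s'_{-i})$ to each state of the form $(s_i,s'_{-i})$ and probability $0$ to every other state; this encodes that $i$ knows his own strategy and believes the rest of the profile is drawn according to $\sigma_{-i}$. It is then straightforward to verify CS1, CS2, PR1 and PR2 from the definitions.

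Having defined $M_\sigma$, I would take $\Omega' = \supp(\sigma)$ and check the four equilibrium clauses. TE1 is immediate, since any $\omega \in \Omega'$ is a profile in $\supp(\sigma)$ and $\strat(\omega)=\omega$. TE2 holds because $\PR_i(\omega)$ is supported on $\{(\strat_i(\omega),s'_{-i}) : s'_{-i}\in\supp(\sigma_{-i})\}\subseteq\supp(\sigma)=\Omega'$. TE3 is immediate from the construction of $\PR_i$. For TE4, a brief computation shows $\EU_i(\omega)=u_i(\strat_i(\omega),\sigma_{-i})$; and because $f$ is opaque, the induced measure $\PR_{i,\ssigma'}(\omega)$ on $\Omega$ has the same marginal $\sigma_{-i}$ on $S_{-i}$, so $\EU_i(\omega,\ssigma')=u_i(\ssigma',\sigma_{-i})$. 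The Nash equilibrium property of $\sigma$ then gives $u_i(\strat_i(\omega),\sigma_{-i}) \ge u_i(\ssigma',\sigma_{-i})$ for every $\ssigma'\in\SSigma_i$ and every $\strat_i(\omega)\in\supp(\sigma_i)$, which is exactly TE4.

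There is no real obstacle in the argument; the only thing one must be a little careful about is choosing $\Omega$ large enough for $f$ to be well defined on all triples $(\omega,i,\ssigma')$ (hence $\Omega = S$ rather than just $\supp(\sigma)$), while keeping $\Omega'=\supp(\sigma)$ so that TE2 remains non-vacuous. Once the state space is set up this way, all four clauses reduce to routine unpacking of the definitions combined with the mixed-strategy Nash equilibrium characterization.
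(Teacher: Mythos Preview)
Your argument is correct and follows essentially the same route as the paper: build an opaque counterfactual structure in which deviations by $i$ leave $\strat_{-i}$ unchanged, take $\Omega'=\supp(\sigma)$, and reduce TE4 to the standard Nash best-response condition. The one substantive difference is your choice $\Omega=S$ versus the paper's $\Omega=\supp(\sigma)$; your choice is the more careful one, since with the paper's smaller state space the map $f((s_i,s_{-i}),i,s_i')=(s_i',s_{-i})$ need not land in $\Omega$ when $s_i'\notin\supp(\sigma_i)$, whereas your construction avoids this well-definedness issue at no cost.
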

\begin{proof} 
Given a Nash equilibrium 
$\sigma=(\sigma_1,\ldots,\sigma_n)$, consider the following
counterfactual structure $M_\sigma = (\Omega, \mathbf{s}, f,
\PR_1,\ldots,\PR_N)$:
\begin{itemize}

\item $\Omega$ is the set of strategy profiles in the support of
$\sigma$;
\item $\strat(s) = s$;
\item $\PR_i(s_i,s_{-i})(s_i',s_{-i}') = 
\left\{
\begin{array}{ll}
0 &\mbox{if $s_i' \ne s_i$}\\
\sigma_{-i}(s_{-i}') &\mbox{if $s_i' = s_i$;}
\end{array}
\right.$
\item $f((s_i,s_{-i}),i,s')=(s',s_{-i})$. 
\end{itemize}

It is easy to check that $\sigma$ is a translucent equilibrium in
$M_\sigma$; we simply take $\Omega' = \Omega$.  The fact that $f$ is an
``opaque'' closest-state function, which is not affected by the strategy
used by players, means that rationality in $M$ reduces to the standard
definition of rationality.  We leave details to the reader.  
\end{proof}

Although the fact that we can consider arbitrary counterfactual
structures (appropriate for $\G$) means that many strategy profiles are
translucent equilibria, the notion of translucent equilibrium has some
bite.  For example, the strategy profile $(C,D)$, where player 1
cooperates and player 2 defects, is not a translucent
equilibrium in Prisoner's Dilemma: if player 1 believes that player 2
is playing defecting with probability 1, there are no beliefs that 1
could have that would justify cooperation.  However, as we shall see,
both $(C,C)$ and $(D,D)$ are translucent equilibria.  This follows from
the characterization of translucent equilibrium that we now give.

\begin{definition}\label{dfn:coherent}
{\rm A mixed-strategy profile $\sigma$ in $\G$ is \emph{coherent} if
for all players $i \in P$, all $s_i\in\text{supp}(\sigma_i)$, and all
$s_i'\in S_i$, there is $s_{-i}'\in S_{-i}$ such that 
$$u_i(s_i,\sigma_{-i})\geq u_i(s')$$
(where, of course, $u_i(s_i,\sigma_{-i}) = \sum_{s_{-i}'' \in S_{-i}'}
\sigma_{-i}(s_{-i}'') u_i(s_i, s_{-i}'')$).
} \wbox
\end{definition}
That is, $\sigma$ is coherent if, for all pure strategies for player $i$
in the support of $\sigma_i$, if $i$'s belief about the strategies being
played by the other  players is given by $\sigma_{-i}$, 
 there is no obviously better strategy that
$i$ can switch to in the weak sense that, if $i$ contemplates switching
to $s_i'$, there are beliefs that $i$ could have about the other players
(namely, that they would definitely play $s_{-i}'$ in this case) that
would make switching 
to $s_i'$ better than sticking with $s_i$.

It is easy to see that $(C,C)$ and $(D,D)$ in Prisoner's Dilemma are
both coherent;  on the other hand, $(C,D)$ is not.  

Halpern and Pass \citeyear{HaPa13} define a pure strategy profile to
be \emph{individually rational} if it is coherent.
Definition~\ref{dfn:coherent} extends individual rationality to mixed
strategies.  Halpern and Pass prove that a pure strategy profile
is
individually rational 
if there is a model where it is commonly known that $\sigma$ is played
and there is common belief of rationality.  The definition of
translucent equilibrium can be seen as the generalization of this
characterization of IR to mixed strategies.  As the following theorem
shows, we get an analogous representation.

\begin{theorem}\label{thm:translucenteq}
The mixed strategy profile $\sigma$ of game $\G$ is coherent iff $\sigma$ is a
translucent equilibrium of $\G$.
\end{theorem}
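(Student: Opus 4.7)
The plan is to prove the two directions separately, with the easy direction being the ``translucent equilibrium $\Rightarrow$ coherent'' implication.

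For the easy direction, suppose $\sigma$ is a translucent equilibrium in some $M = (\Omega, \strat, f, \PR_1, \ldots, \PR_N)$ witnessed by $\Omega' \subseteq \Omega$. Fix a player $i$ and $s_i \in \supp(\sigma_i)$. First I would locate a state $\omega \in \Omega'$ with $\strat_i(\omega) = s_i$; such a state exists because, picking any $\omega_0 \in \Omega'$ and any $j \ne i$, TE3 forces the marginal of $\strat_i$ under $\PR_j(\omega_0)$ to equal $\sigma_i$, and TE2 keeps the support inside $\Omega'$. Using TE3 again for player $i$ at $\omega$, the expected utility simplifies to $\EU_i(\omega) = u_i(s_i, \sigma_{-i})$. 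Then, for any $s_i' \in S_i$, the key identity $\EU_i(\omega, s_i') = \sum_{\omega''} \PR_i(\omega)(\omega'') u_i(s_i', \strat_{-i}(f(\omega'', i, s_i')))$ is an average over $\omega''$ in the support of $\PR_i(\omega)$. TE4 gives $\EU_i(\omega) \ge \EU_i(\omega, s_i')$, and since an average cannot exceed all of its summands, there exists some $\omega''$ with $u_i(s_i, \sigma_{-i}) \ge u_i(s_i', \strat_{-i}(f(\omega'', i, s_i')))$. Setting $s_{-i}' := \strat_{-i}(f(\omega'', i, s_i'))$ yields the coherence witness.

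For the harder direction I would construct a canonical counterfactual structure out of the coherence witnesses. Take $\Omega = S$ and $\strat(s) = s$. For $\omega \in \supp(\sigma)$, define $\PR_i(\omega)(\omega') = \sigma_{-i}(\strat_{-i}(\omega'))$ if $\strat_i(\omega') = \strat_i(\omega)$ and $0$ otherwise; for $\omega \notin \supp(\sigma)$ let $\PR_i(\omega)$ be the point mass at $\omega$. For the closest-state function, set $f(\omega, i, s_i^*) = \omega$ whenever $s_i^* = \strat_i(\omega)$; if $\omega \in \supp(\sigma)$ and $s_i^* \ne \strat_i(\omega)$, apply coherence with $s_i := \strat_i(\omega)$ and $s_i' := s_i^*$ to pick some $s_{-i}^*$ with $u_i(s_i, \sigma_{-i}) \ge u_i(s_i^*, s_{-i}^*)$, and put $f(\omega, i, s_i^*) = (s_i^*, s_{-i}^*)$; for $\omega \notin \supp(\sigma)$, define $f$ arbitrarily (e.g.\ $(s_i^*, \strat_{-i}(\omega))$).

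The bulk of the work is the verification, taking $\Omega' = \supp(\sigma)$. CS1, CS2, PR1 are immediate from construction; PR2 requires noting that for $\omega \in \supp(\sigma)$ every $\omega'$ with $\PR_i(\omega)(\omega') > 0$ itself lies in $\supp(\sigma)$ and shares $\strat_i(\omega') = \strat_i(\omega)$, which makes $\PR_i(\omega') = \PR_i(\omega)$ by the construction's dependence only on $\strat_i(\omega)$. TE1 and TE2 follow by inspection, and TE3 is a direct computation of the marginal. The interesting axiom is TE4: at $\omega \in \Omega'$ with $\strat_i(\omega) = s_i$, TE3 gives $\EU_i(\omega) = u_i(s_i, \sigma_{-i})$, while for any deviation $s_i^* \ne s_i$ all $\omega''$ in the support of $\PR_i(\omega)$ share $\strat_i(\omega'') = s_i$, so they all map under $f(\cdot, i, s_i^*)$ to the \emph{same} profile $(s_i^*, s_{-i}^*)$, giving $\EU_i(\omega, s_i^*) = u_i(s_i^*, s_{-i}^*) \le u_i(s_i, \sigma_{-i})$ by the defining property of the coherence witness.

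The main obstacle, such as it is, lies in keeping the construction self-consistent on the boundary between in-support and out-of-support states so that PR1 and PR2 hold globally and so that the coherence witness used inside $f$ is the ``same'' one that governs the computation of $\EU_i(\omega, s_i^*)$; once one observes that all $\omega''$ in $\supp(\PR_i(\omega))$ agree on $\strat_i$, the calculation collapses to a single application of the coherence inequality, which is the entire point of the definition.
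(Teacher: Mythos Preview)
Your proof is correct and follows essentially the same approach as the paper: the same canonical structure $\Omega=S$ with $f$ built from coherence witnesses for the forward direction, and the averaging observation on $\EU_i(\omega,s_i')$ for the reverse. Your reverse direction is in fact a bit more careful than the paper's contrapositive argument, since you explicitly locate a state $\omega\in\Omega'$ with $\strat_i(\omega)=s_i$ via TE3 for a player $j\ne i$ together with TE2---a step the paper leaves implicit.
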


\begin{proof}
Let $\sigma$ be a coherent strategy profile in $\G$.  We construct a counterfactual
structure $M = (\Omega, \mathbf{s}, f, \PR_1,\ldots,\PR_N)$ as follows:
\begin{itemize}
\item $\Omega = S$;
\item $\strat(s) = s$;
\item $\PR_i(\omega)(\omega') = \left\{
\begin{array}{ll}
1 &\mbox{if $\omega \notin \supp(\sigma_i)$, $\omega = \omega'$}\\
0 &\mbox{if $\omega \notin \supp(\sigma_i)$, $\omega \ne \omega'$}\\
\sigma_{-i}(\strat_{-i}(\omega')) &\mbox{if $\omega \in \supp(\sigma_i)$,
$\strat_i(\omega') = \strat_i(\omega)$}\\
0 &\mbox{if $\omega \in \supp(\sigma_i)$,
$\strat_i(\omega') \ne \strat_i(\omega)$};
\end{array}
\right.$
\item $f(\omega,i,s_i') = \left\{
\begin{array}{ll}
(s_i',\mathbf{s}_{-i}(\omega)) &\mbox{if $\omega \notin
\supp(\sigma_i)$}\\
\omega &\mbox{if $\omega \in \supp(\sigma_i)$, $s_i' = \strat_i(\omega)$}\\
(s_i',s_{-i}') & \mbox{if $\omega \in \supp(\sigma_i)$, $s_i' \ne
\strat_i(\omega)$, where $s_{i}'$ is a}\\ &\mbox{strategy such that
$u_i(\mathbf{s}_i(\omega),\sigma_{-i})\geq u_i(s')$;} \\
&\mbox{such a strategy is
guaranteed to exist since}\\ &\mbox{$\sigma$ is coherent.} 
\end{array}
\right.$
\end{itemize}

We first show that $M$ is a finite counterfactual structure appropriate
for $\G$; in particular, $\PR_i$ satisfies PR1 and PR2 and $f$
satisfies CS1 and CS2.  For PR1 and PR2, there are two cases.  If
$\omega \notin \supp(\sigma)$, then $\PR_i(\omega)(\omega) = 1$, so PR1
and PR2 clearly hold.  If $\omega \notin \supp(\omega)$, then 
$\PR_i(\omega)(\omega) > 0$ iff $\strat_i(\omega) = \strat_i(\omega')$.
Moreover, if $\strat_i(\omega) = \strat_i(\omega')$, then it is
immediate from the definition that $\PR_i(\omega) = \PR_i(\omega')$, so
PR2. holds.  That CS1 and CS2 hold is immediate from the definition of $f$.

To show that $\sigma$ is a translucent equilibrium in $M$, let
$\Omega' = \supp(\sigma)$.  For each state $\omega \in \Omega'$, TE1
clearly holds.  Note that if $\omega \in
\supp(\sigma)$, then $\PR_i(\omega) = (\strat_i(\omega),
\sigma_{-i}(\omega))$ (identifying the strategy profile with a
probability measure), so TE2 and TE3 clearly hold. 
It remains to show that TE4 holds, that is, 
that every player is rational at every state $\omega\in\Omega'$.

Thus, we must show that $\EU_i(\omega) \ge \EU(\omega,s_i^*)$ for all
$s_i^* \in S_i$.  Note that
$$\begin{array}{lll}
\EU_i(\omega)
&= &\sum_{\omega'\in\Omega}\mathcal
\PR_i(\omega)(\omega')u_i(\mathbf{s}_i(\omega),\mathbf{s}_{-i}(\omega')) \\
&= &\sum_{\{\omega'\in\Omega: \strat_i(\omega') = \strat_i(\omega)\}}
\sigma_{-i}(\strat_{-i}(\omega'))
u_i(\mathbf{s}_i(\omega),\mathbf{s}_{-i}(\omega'))\\ 
&= &\sum_{s_{-i}''\in S_{-i}}
u_i(\mathbf{s}_i(\omega),s_{-i}'')\\ 
&=&u_i(\strat_i(\omega),\sigma_{-i}).
\end{array}
$$
By definition,
$$
\EU_i(\omega,\ssigma_i^*)=\sum_{\omega'\in\Omega}\mathcal
\PR_{i,\ssigma_i^*}(\omega)(\omega')u_i(\ssigma_i^*,\mathbf{s}_{-i}(\omega'))$$
and 
$$\PR_{i,\ssigma'}(\omega)(\omega')=\sum_{\{\omega'' :
f(\omega'',i,\ssigma')=\omega'\}}\mathcal P\mathcal R_i(\omega)(\omega''). 
$$
Now if 
$s_i^*=\mathbf{s}_i(\omega)$, then $f(\omega,i,s_i^*)$.  In this case,
it is easy to check that $\PR_{i,\ssigma_i^*}(\omega) = \PR_i(\omega)$,
so 
$\EU_i(\omega,s_i^*)=\EU_i(\omega) = \EU_i(s_i,\sigma_{-i})$, and TE4
clearly holds.  
On the other hand, if 
$s_i^*\neq\mathbf{s}_i(\omega)$, then 
$$\begin{array}{lll}
\EU_i(\omega,s_i^*)
&=&\sum_{\omega'\in\Omega}\sum_{\{\omega'':f(\omega'',i,s_i^*)=\omega'\}}\PR_i(\omega)(\omega'')u_i(s_i^*,\mathbf{s}_{-i}(\omega'))\\
&=&\sum_{\{\omega'\in\Omega: \strat_i(\omega') =
s_i^*\}}\sum_{\{\omega'':f(\omega'',i,s_i^*)=\omega',\, 
\strat_i(\omega'') = \strat_i(\omega)\}}\sigma_{-i}(\omega'')
u_i(s_i^*,\mathbf{s}_{-i}(\omega'))\\ 
&=&\sum_{\{\omega'\in\Omega: \strat_i(\omega') =
s_i^*\}}\sum_{\{\omega'':f(\omega'',i,s_i^*)=\omega',\, 
\strat_i(\omega'') = \strat_i(\omega)\}}\sigma_{-i}(\omega'')
u_i(f(\omega'',i,s_i^*)).\\ 
\end{array}
$$
By definition, $u_i(f(\omega'',i,s_i^*))\leq
u_i(\strat_i(\omega''),\sigma_{-i})=u_i(\strat_i(\omega),\sigma_{-i})$. Thus,
$$\begin{array}{lll}
\EU_i(\omega,s_i^*)
&\le&\sum_{\{\omega'\in\Omega: \strat_i(\omega') =
s_i^*\}}\sum_{\{\omega'':f(\omega'',i,s_i^*)=\omega',\, 
\strat_i(\omega'')= \strat_i(\omega)\}}\sigma_{-i}(\omega'')
u_i(\strat_i(\omega),\sigma_{-i})\\ 
&=&u_i(\strat_i(\omega),\sigma_{-i}) \sum_{\{\omega'\in\Omega: \strat_i(\omega') =
s_i^*\}}\sum_{\{\omega'':f(\omega'',i,s_i^*)=\omega',\, 
\strat_i(\omega'') = \strat_i(\omega)\}}\sigma_{-i}(\omega'')\\
&=&u_i(\strat_i(\omega),\sigma_{-i}).
\end{array}
$$
This completes the proof that TE4 holds, and the proof of the ``only if''
direction of the argument

The ``if'' is actually much simpler. Suppose, by way of contradiction,
that $\sigma$ is not coherent.  Then
there is a player $i$ and a strategy $s_i\in\text{supp}(\sigma_i)$ such
that for all $s_{-i}' \in S_i$, we have $u_i(s_i,\sigma_{-i})<u_i(s')$. It
follows that, for all counterfactual structures $M$, no matter what the
beliefs and the closest-state functions are in $M$,  
it is always strictly profitable for player $i$ to switch strategy from
$s_i$ to $s_i'$. Consequently, $i$ is not rational at a state $\omega$
such that $s_i(\omega)=s_i$, contradicting TE4.
\end{proof}

\commentout{
We conclude by comparing translucent equilibrium to iterated minimax
domination.  We begin by reviewing the relevant definitions.

\begin{definition} {\rm Strategy $s_i$ for player $i$ in game $\G
= (P,\SSigma_1,\ldots,\SSigma_N,u_1,\ldots,u_N)$ is 
\emph{minimax dominated with respect to $S'_{-i}\subseteq S_{-i}$} iff
there 
exists a strategy $s'_i \in S_i$ such that  
$$\min_{t_{-i} \in
S'_{-i}} u_i(s'_i, t_{-i}) > \max_{t_{-i} \in 
S'_{-i}} u_i(s_i, t_{-i}).$$}
\wbox
\end{definition}

\noindent Thus, player $i$'s strategy $s_i$ is minimax dominated
with respect to $S'_{-i}$ iff there exists
a strategy $s_i'$ for player $i$ such that the worst-case payoff for
player $i$ if he uses $s'$ is strictly better than his best-case
payoff if he uses $s$, given that the other players are restricted to using a
strategy in $S'_{-i}$.

\begin{definition} Given a game $\G =
(P,\SSigma_1,\ldots,\SSigma_N,u_1,\ldots,u_N)$,  
define $\NSD_j^k(\G)$ inductively: let $\NSD_j^0(\G) =
S_j$ and let $\NSD_j^{k+1}(\G)$
consist of the strategies in $\NSD_j^{k}(\G)$ not minimax 
dominated with respect to $\NSD_{-j}^{k}(\G)$.
Strategy $s \in S_i$ \emph{survives iterated minimax domination}
if $s \in \cap_k\NSD_i^k(\G)$.  \wbox
\end{definition}

As these definitions show, only pure strategies of individual players
survive (or do not survive) iterated minimax domination.  It is not hard
to show that both $C$ and $D$ survive iterated minimax domination.  But,
as we have observed, although $(C,C)$ and $(D,D)$ are translucent
equilibria in Prisoner's Dilemma, $(C,D)$ is not.  So not every profile
made up of strategies that survive iterated minimax domination is a
translucent equilibrium.  On the other hand, as the following example
shows, the strategies in a profile that is a translucent equilibrium may
not survive iterated minimax domination.  

\begin{example} {\rm Consider the two-player game where $S_1 =
\{s_1^1,s_2^1,s_3^1\}$, $S_2 = \{s_1^2,s_2^2,s_3^2\}$, and the utilities
are defined by the following table:
\begin{table}[htb]
\begin{center}
\begin{tabular}{|| c | c | c | c ||}
\hline
{} & $s_1^2$ & $s_2^2$ & $s_3^2$\\
\hline
$s_1^1$ & (1,1) & (1,2) & (1,3) \\
\hline
$s_2^1$ & (2,1) & (2,2) & (2,3) \\
\hline
$s_3^1$ & (3,1) & (3,2) & (3,3)) \\
\hline
\end{tabular}
\end{center}
\end{table}
Thus, if player $i$ plays $s_i^k$, then his utility is $k$, independent
of what the other player does.

It easily follows from Theorem~\ref{thm:translucenteq} that
$(s_1^2,s_2^2)$ is a translucent equilibrium, since it is coherent (we
can take $s'$ in Definition~\ref{dfn:coherent} to be $(s_1^1,s_1^2)$.
On the other hand, $s_1^j$ and $s_2^j$ are minimax dominated by $s_3^j$.
\wbox }
\end{example}
}

\subsection{Translucent equilibrium in social dilemmas}\label{sec:teinsd}

As we now show, our characterizations of
Propositions~\ref{prop:PD}--\ref{pro:Bertrand} can be used to provide
conditions on when translucent equilibrium exists in these social
dilemmas.

We start our analysis with Prisoner's Dilemma. We capture the
assumption that $\beta$ is the probability of cooperation, and that
players either cooperate or defect, by assuming
that players follow a mixed strategy where they cooperate with
probability $\beta$ and defect with probability $1-\beta$.

\begin{proposition}\label{pro:PDequilibrium1} 
$(\beta_1 C + (1-\beta_1)D, \beta_2 C + (1-\beta_2)D)$ is a
  translucent equilibrium of Prisoner's dilemma iff
$\beta_i b \ge c$, for $i = 1,2$, or $\beta_1 = \beta_2 = 0$.  
\end{proposition}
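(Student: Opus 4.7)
\medskip

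\noindent\textbf{Proof proposal.} The plan is to reduce to Theorem~\ref{thm:translucenteq}: a mixed profile is a translucent equilibrium iff it is coherent in the sense of Definition~\ref{dfn:coherent}. So it suffices to characterize when $\sigma = (\beta_1 C + (1-\beta_1)D,\, \beta_2 C + (1-\beta_2)D)$ is coherent in Prisoner's Dilemma. First I would record the mixed-strategy payoffs. For player $i$, writing $j$ for the other player,
\[
u_i(C,\sigma_{-i}) = \beta_j(b-c) + (1-\beta_j)(-c) = \beta_j b - c, \qquad u_i(D,\sigma_{-i}) = \beta_j b.
\]

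Coherence asks that, for each $s_i\in\supp(\sigma_i)$ and each $s_i' \in \{C,D\}$, there exists $s_{-i}' \in \{C,D\}$ with $u_i(s_i,\sigma_{-i}) \ge u_i(s_i', s_{-i}')$. I would then cycle through the four possible $(s_i,s_i')$ combinations, in each case taking $s_{-i}'$ to be the pure opponent strategy that minimizes $u_i(s_i',\cdot)$; in Prisoner's Dilemma this is $D$ in both cases, giving $\min_{s_{-i}'} u_i(C, s_{-i}') = -c$ and $\min_{s_{-i}'} u_i(D, s_{-i}') = 0$. Three of the four cases impose no constraint: when $s_i = D$, the bound to beat is $0$ and $u_i(D,\sigma_{-i}) = \beta_j b \ge 0$ automatically; when $s_i = C$ and $s_i' = C$, the bound is $-c$ and $u_i(C,\sigma_{-i}) = \beta_j b - c \ge -c$. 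The single binding case is $s_i = C$, $s_i' = D$, which is in play only when $\beta_i > 0$, and it requires $\beta_j b - c \ge 0$. Hence coherence reduces to the compact condition: whenever $\beta_i > 0$ one has $\beta_j b \ge c$.

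Finally I would assemble the two resulting families. Either $\beta_1 = \beta_2 = 0$ (the pure Nash profile $(D,D)$, vacuously coherent, and already covered by Proposition~\ref{prop:nashistranslucent}), or else both $\beta_1 > 0$ and $\beta_2 > 0$ together with $\beta_1 b \ge c$ and $\beta_2 b \ge c$; the two positivity clauses are automatic once the two payoff inequalities hold, since $b,c > 0$. This is exactly the statement of the proposition. No real obstacle is anticipated; the only care needed is to track the index swap (coherence for player $i$ constrains $\beta_j$, not $\beta_i$) and to separate out the corner case $\beta_1 = \beta_2 = 0$ so that it is not accidentally excluded by the implication $\beta_i > 0 \Rightarrow \beta_j b \ge c$.
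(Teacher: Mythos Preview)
Your proposal is correct and follows essentially the same route as the paper's own proof: both invoke Theorem~\ref{thm:translucenteq} to reduce the question to coherence, identify $u_1(C,\sigma_{-1}) \ge u_1(D,D)$ (and its symmetric counterpart) as the single nontrivial constraint, and handle the $\beta_1=\beta_2=0$ corner separately via Proposition~\ref{prop:nashistranslucent}. Your write-up is in fact more explicit than the paper's in verifying that the remaining three $(s_i,s_i')$ cases are vacuous.
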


\begin{proof} Suppose that $(\beta_1 C + (1-\beta_1)D, \beta_2 C +
  (1-\beta_2)D)$.  If $\beta_1 > 0$, then
by Theorem~\ref{thm:translucenteq}, it easily follows
we must have $u_1(C,\beta_2 C + (1-\beta_2)D) \ge u_1(D,D)$.
Thus, we must have $\beta_2(b-c) + (1-\beta_2)(- c) \ge 0$;
equivalently, $\beta_2 b \ge c$.  Note that since $c > 0$, this means
that we must have $\beta_2 > 0$.  Similarly, if $\beta_2  > 0$, then
$\beta_1 b \ge c$.  By Theorem~\ref{prop:nashistranslucent}, $(D,D)$
is a translucent   equilibrium, since it is a Nash equilibrium.  
Thus, either $\beta_i b \ge c$ for $i = 1,2$ or $\beta_1 = \beta_2 =
0$.  

Conversely, if $\beta_i b \ge c$ for $i = 1, 2$, then it again easily
follows from Theorem~\ref{thm:translucenteq} that 
$(\beta_1 C + (1-\beta_1)D, \beta_2 C + (1-\beta_2)D)$ is a
  translucent equilibrium.  As we have observed, $(D,D)$ (the case
  that $\beta_1 = \beta_2 = 0$) is also a translucent equilibrium.
\end{proof}

Proposition~\ref{pro:PDequilibrium1} 
is not all that interesting, since it does not take into account
a player's beliefs regarding translucency.  The following definition
is a step towards doing this.
Suppose that $M$ is  counterfactual structure appropriate for a social
dilemma $\Gamma$.  
\emph{Player $i$ has type $\alpha_i$ in $M$} if, at each
state $\omega$ in $M$, player $i$ believes that 
if he intends to cooperate in $\omega$ and deviates from that,
then each other  agent will independently realize this with
probability $\alpha_i$ and will defect.
Formally, this means that,
at each state $\omega$ in $M$, we have 

\commentout{
\begin{itemize}
\item if $\strat_i(\omega) =
\ssigma_i^W$ (i.e., $i$ is cooperating in $\omega$ by playing his component of the
social-welfare maximing strategy profile),  then, for each player
$j\ne i$, we have $\PR_i(\omega)(\{\omega': f(\omega',i,s_i^N) =
\omega'', s_j(\omega') =s_j^C, s_j(\omega'') = s_j^N\}) = 
\alpha_i \PR_i(\omega)\{\omega': s_j(\omega') = s_j^C\})$.
\end{itemize}
}
\begin{itemize}
\item if $\strat_i(\omega) =
\ssigma_i^W$ (i.e., $i$ is cooperating in $\omega$ by playing his component of the
social-welfare maximing strategy profile),  then, for each $J\subseteq P\setminus\{i\}$, we have
$\PR_i(\omega)(\{\omega': f(\omega',i,s_i^N) =
\omega'', s_j(\omega') =s_j^C, s_j(\omega'') = s_j^N, \forall j\in J\}) = 
\alpha_i ^{|J|}\PR_i(\omega)\{\omega': s_j(\omega') = s_j^C, \forall j\in J\})$.
\end{itemize}


\begin{proposition}\label{pro:PDequilibrium}
$(\beta_1 C + (1-\beta_1)D, \beta_2 C + (1-\beta_2)D)$ is a
  translucent equilibrium of the Prisoner's dilemma in a structure where player $i$ has type
  $\alpha_i$ 
if and only if $\beta_1 = \beta_2 = 0$ or
$\alpha_i \beta_{3-i} b \ge c$ for $i \ge 1,2$.
\end{proposition}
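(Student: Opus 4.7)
The plan is to combine the expected-utility computation from Proposition~\ref{prop:PD} with conditions TE1--TE4, splitting into the two implications and constructing an explicit counterfactual structure for the ``if'' direction.

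For the ``only if'' direction, I would suppose the stated profile is a translucent equilibrium in a structure $M$ where player $i$ has type $\alpha_i$, witnessed by some $\Omega'$. Suppose first that $\beta_1>0$. By TE1 there is $\omega\in\Omega'$ with $\strat_1(\omega)=C$; TE3 then forces the marginal of $\PR_1(\omega)$ on player~$2$'s strategy coordinate to be $\beta_2 C+(1-\beta_2)D$, and the type-$\alpha_1$ condition pins down $\PR_{1,D}(\omega)$: after player~$1$ deviates, player~$2$ cooperates with probability $\beta_2(1-\alpha_1)$. Exactly the calculation in the proof of Proposition~\ref{prop:PD} gives $\EU_1(\omega)=\beta_2 b-c$ and $\EU_1(\omega,D)=\beta_2(1-\alpha_1)b$, so TE4 yields $\alpha_1\beta_2 b\geq c$. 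Since $c>0$, this forces $\beta_2>0$; the symmetric argument applied to player~$2$ then gives $\alpha_2\beta_1 b\geq c$ as well. The remaining case $\beta_1=0$ is handled analogously: if $\beta_2>0$ one would need $\alpha_2\cdot 0\cdot b\geq c>0$, a contradiction, so $\beta_2=0$ too.

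For the ``if'' direction, if $\beta_1=\beta_2=0$ the profile is the Nash equilibrium $(D,D)$, hence a translucent equilibrium by Proposition~\ref{prop:nashistranslucent}. Otherwise, when $\alpha_i\beta_{3-i}b\geq c$ for both $i$, I would construct an explicit counterfactual structure $M$ with state space $\Omega=\{C,D\}^2\times\{0,1\}^2$, where the $v$-coordinates encode which players detect a deviation. Set $\strat((s,v))=s$; define $f((s,v),i,s_i^*)=(s,v)$ when $s_i^*=s_i$; when $s_i=C$ and $s_i^*=D$, let player $3-i$ switch to $D$ if $v_{3-i}=1$ and keep $s_{3-i}$ otherwise; when $s_i=D$ and $s_i^*=C$, simply replace $s_i$ by $C$ and leave all other coordinates intact. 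The belief $\PR_i((s,v))$ fixes $(s_i,v_i)$ and assigns independent product probabilities $\beta_{3-i}$ to $s_{3-i}=C$ and $\alpha_i$ to $v_{3-i}=1$. Take $\Omega'$ to be those states whose $s$-coordinate lies in $\supp(\sigma)$.

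Conditions TE1--TE3 are immediate from the construction, and it is equally straightforward to check that player $i$ has type $\alpha_i$ in $M$. For TE4 there are two cases at each $\omega\in\Omega'$: if $\strat_i(\omega)=C$, the calculation in Proposition~\ref{prop:PD} gives $\EU_i(\omega)=\beta_{3-i}b-c$ and $\EU_i(\omega,D)=\beta_{3-i}(1-\alpha_i)b$, so rationality reduces precisely to the hypothesis $\alpha_i\beta_{3-i}b\geq c$; if $\strat_i(\omega)=D$, then by construction switching to $C$ does not alert the other player, so $\EU_i(\omega,C)=\beta_{3-i}b-c\leq\beta_{3-i}b=\EU_i(\omega)$. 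The main obstacle I foresee is not the arithmetic but defining $f$ on defecting states (where $s_i=D$) so as to satisfy CS1--CS2 without imposing spurious constraints that could break the type-$\alpha_i$ assumption; since that assumption only regulates deviations from cooperation, the conservative choice above (no alert when $i$ switches toward cooperation) suffices.
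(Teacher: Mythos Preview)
Your proof is correct and follows essentially the same approach as the paper: both construct an explicit counterfactual structure on $\{C,D\}^2\times\{0,1\}^2$ with detection bits encoded in the $v$-coordinates, reduce TE4 to the expected-utility calculation of Proposition~\ref{prop:PD}, and handle the converse by the same arithmetic (the paper phrases it as a contradiction, you as a direct chain, but the content is identical; your asymmetric $f$ that triggers detection only on $C\to D$ is a harmless variation since the type-$\alpha_i$ condition constrains only that deviation). One small remark, which the paper's proof also glosses over: TE1 by itself does not furnish a state in $\Omega'$ where player~$1$ cooperates --- you obtain such a state by applying TE3 to player~$2$ (whose belief gives player~$1$ weight $\beta_1>0$ on $C$) together with TE2.
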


\begin{proof}
Suppose that 
$\alpha_i \beta_{3-i} b \ge c$ for $i = 1 ,2$ or $\beta_1 = \beta_2 = 0$.
  We show that $(\beta_1 C + (1-\beta_1)D,
\beta_2 C + (1-\beta_2)D)$ is a
translucent equilibrium in a structure where 
player $i$ has type  $\alpha_i$. 
Consider the counterfactual structure
$M(\alpha_1,\alpha_2)$ defined as follows
\begin{itemize}
\item $\Omega = \{C,D\}\times \{0,1\}^{2}$.  (The second component of the
state, which is an element of $\{0,1\}^{2}$, is used to determine the
closest-state function.  Roughly speaking, if $v_j = 1$, then player
$j$ learns  about a deviation if there is one; if $v_j = 0$, he does not.)
\item $\strat((s,v)) = s$.
\item $f((s,v),i,s_i^*) = 
\left\{
\begin{array}{ll}
(s,v) &\mbox{if $s_i = s_i^*$,}\\
(s',v) &\mbox{if $s_i \ne s_i^*$, where $s'_i = s_i^*$ and for $j
\ne i$,}\\
&\mbox{$s'_j = s_j$ if $v_j = 0$ and $s'_j = s^N_j$ if $v_j = 1$.}
\end{array}
\right.$

\noindent Thus, if player $i$ changes strategy from $s_i$ to $s_i'$, $s_i'\neq
 s_i$, then each other player $j$ either deviates to his component of
the Nash equilibrium or continues with his current strategy, 
depending on whether $v_j$ is 0 or 1. Roughly speaking, he switches to his
component of the Nash equilibrium if he learns about a deviation
(i.e., if $v_j = 1$).
\item $\PR_i(s,v)(s',v') =\left\{
\begin{array}{lll}
0 &\mbox{if $s_i\ne s_i'$, or $v_i \ne v_i'$,}\\
\sigma_{3-i}(s_{3-i})\pi_i(v_{3-i})  &\mbox{if $s = s'$},
\end{array}
\right.$
where $\sigma_{3-i}$ is the distribution on strategies that puts
probability $\beta_{3-i}$ on $C$ and probability $1-\beta_{3-i}$ on
$D$, while $\pi_i$ is the distribution that puts probability $\alpha_i$
on 1 and probability $1-\alpha_i$ on 0.
\noindent Thus, if $s = s'$, then the 
probability of the $v'$ component is determined by
assuming that the other player ($3-i$) independently learns about a deviation
by $i$ with probability $\alpha_i$.
\end{itemize}

Clearly, $M(\alpha_1,\alpha_2)$ is a structure where player $i$ has type
$\alpha_i$, for $i=1,2$. 
We claim that $(\beta_1  C + (1-\beta_1)D, \beta_2 C + (1-\beta_2)D)$
is a translucent equilibrium in the 
counterfactual structure $M(\alpha_1,\alpha_2)$.  

There are two cases.  If $\beta_1 = \beta_2 = 0$, then 
let $\Omega'$ consist of all states of the form $((D,D),v)$.  It is
easy to check that TE1--4 hold.
If $\alpha_i \beta_{3-i} b \ge c$ for $i \ge 1,2$,
let $\Omega' = \Omega$. 
It is immediate that TE1, TE2, and TE3 hold.  Since $\alpha_i \beta_{3-i} b
\ge c$, it follows from Proposition~\ref{prop:PD} that player $i$ is
rational at each state in $\Omega$; thus, TE4 holds.  

For the converse, suppose that $M$ is a structure where player $i$ has type
$\alpha_i$, for $i=1,2$, and  
$(\beta_1  C + (1-\beta_1)D, \beta_2 C + (1-\beta_2)D)$ is a
translucent equilibrium in $M$.    If it is not the case that either
$\beta_1 = \beta_2 = 0$ or $\alpha_i \beta_{3-i} b \ge c$ for $i=1,2$,
without loss of generality we can assume that $\beta_1 > 0$ and 
that $\alpha_1 \beta_2 b < c$.  Let $\omega$ be a state in the set
$\Omega'$ where player 1 cooperates.  Since player 1 must  be rational
at $\Omega'$, we must have $u_1(C,\beta_2 C + (1-\beta_2) D) \ge 
((1-\beta_2) + \alpha_1\beta_2) u_1(D,D) + (1-\alpha_1)\beta_2 u_1(D,C)$.
Simple calculations show that this inequality holds iff
$\beta_2 (b-c) + (1-\beta_2)(-c) \ge (1-\alpha_1)\beta_2 b$ or, equivalently,
$\alpha_1 \beta_2 b \ge c.$  This gives the desired contradiction.
\end{proof}


The following propositions can be proved in a similar fashion. We leave details to the reader.

\begin{proposition}\label{pro:TDequilibrium1} 
$(\beta_1 H + (1-\beta_1)L, \beta_2 H + (1-\beta_2)L)$ is a
translucent equilibrium of the Traveler's dilemma if and only if 
$b\le\frac{(H-L)\beta_i}{1-\beta_i}$, for $i = 1,2$, or $\beta_1 = \beta_2 = 0$.  \wbox
\end{proposition}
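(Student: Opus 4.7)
The plan is to invoke Theorem~\ref{thm:translucenteq} to reduce the existence of a translucent equilibrium to a coherence check, and then exploit the special structure of Traveler's Dilemma to pin down the unique binding deviation. The argument will parallel that of Proposition~\ref{pro:PDequilibrium1}, but requires a slightly more careful case analysis since the deviation space is much larger than just two strategies.

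First I would compute the two utilities that show up in the support: $u_i(L,\sigma_{3-i}) = L + \beta_{3-i} b$ and $u_i(H,\sigma_{3-i}) = \beta_{3-i} H + (1-\beta_{3-i})(L-b)$. Next, for each candidate deviation $s_i' \in \{L,\ldots,H\}$, I would identify the opponent response $s_j'$ that minimizes $u_i(s_i',s_j')$: if $s_i' = L$ the minimum is $L$ (attained at $s_j' = L$), whereas if $s_i' > L$ the minimum is $L-b$ (again at $s_j' = L$). Since coherence asks only for the \emph{existence} of a suitable $s_j'$, this observation collapses the binding constraint, whenever $H$ lies in the support of $\sigma_i$, to the single comparison with the deviation $s_i' = L$.

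With this reduction in hand the ``if'' direction becomes routine. If $\beta_1 = \beta_2 = 0$ then $\sigma = (L,L)$ is the unique Nash equilibrium of Traveler's Dilemma and hence a translucent equilibrium by Proposition~\ref{prop:nashistranslucent}. If instead $b \le \frac{(H-L)\beta_i}{1-\beta_i}$ for both $i$, I would verify coherence for each $s_i \in \supp(\sigma_i) \subseteq \{L,H\}$: for $s_i = L$ the inequality $u_i(L,\sigma_{3-i}) \ge L - b$ is automatic, while for $s_i = H$ the only non-trivial inequality $\beta_{3-i} H + (1-\beta_{3-i})(L-b) \ge L$ rearranges algebraically to exactly the hypothesis on $\beta_{3-i}$.

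For the ``only if'' direction, I would assume $\sigma$ is coherent and $(\beta_1,\beta_2) \ne (0,0)$; without loss of generality $\beta_1 > 0$, so $H \in \supp(\sigma_1)$. Applying coherence to $s_1 = H$ with $s_1' = L$ supplies some $s_2'$ satisfying $u_1(H,\sigma_2) \ge u_1(L,s_2') \ge L$, which rearranges to $b \le \frac{(H-L)\beta_2}{1-\beta_2}$ and, when $b > 0$, incidentally forces $\beta_2 > 0$. The symmetric step for player $2$ delivers the other inequality. The main obstacle here is not computational but conceptual: one must correctly recognize that the binding deviation is $s_i' = L$ rather than some interior choice such as $s_i' = H-1$, which is precisely what the minimization carried out in the second paragraph establishes.
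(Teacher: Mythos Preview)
Your proposal is correct and follows exactly the approach the paper intends: the paper leaves this proposition to the reader, indicating it is proved ``in a similar fashion'' to Proposition~\ref{pro:PDequilibrium1}, namely by reducing via Theorem~\ref{thm:translucenteq} to a coherence check and then identifying the binding constraint. Your extra case analysis showing that $\min_{s_j'} u_i(s_i',s_j')$ equals $L$ for $s_i'=L$ and $L-b$ for $s_i'>L$ is precisely the additional work needed to handle Traveler's Dilemma's larger strategy space, and your bootstrap from $\beta_1>0$ to $\beta_2>0$ mirrors the corresponding step in the Prisoner's Dilemma proof.
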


\begin{proposition}\label{pro:TDequilibrium}
$(\beta_1 H + (1-\beta_1)L, \beta_2 H + (1-\beta_2)L)$ is a
  translucent equilibrium of the Traveler's dilemma in a structure where player $i$ has type
  $\alpha_i$ 
if and only if $\beta_1 = \beta_2 = 0$ or
$$b\le
\left\{
\begin{array}{ll}
\frac{(H-L)\beta_{3-i}}{1-\alpha_i\beta_{3-i}} &\mbox{if $\alpha_i\ge\frac12$}\\
\min\left(\frac{(H-L)\beta_{3-i}}{1-\alpha_i\beta_{3-i}},\frac{H-L-1}{1-2\alpha_i}\right)
&\mbox{if $\alpha_i<\frac12$.} 
\end{array}
\right.$$ \wbox
\end{proposition}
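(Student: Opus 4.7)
The plan is to mirror the proof of Proposition \ref{pro:PDequilibrium}, using Proposition \ref{prop:TD} in place of Proposition \ref{prop:PD}. The parameter matching is that, for player $i$ in the equilibrium, the detection probability is $\alpha_i$ and the belief about the opponent's cooperation rate (forced by TE3) is $\beta_{3-i}$, so Proposition \ref{prop:TD} specializes exactly to the bound in the statement.

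For the ``if'' direction, if $\beta_1=\beta_2=0$ the profile is $(L,L)$, the unique Nash equilibrium of Traveler's Dilemma, and hence a translucent equilibrium by Proposition \ref{prop:nashistranslucent}. Otherwise, assuming the stated bound on $b$, I would build an analogue $M(\alpha_1,\alpha_2)$ of the counterfactual structure used in Proposition \ref{pro:PDequilibrium}, with state space $\Omega=\{L,H\}\times\{0,1\}^2$ and with $\strat$, $f$, $\PR_i$ defined verbatim, with $C$ replaced by $H$ and $D$ replaced by $L$. The $v$-based detection mechanism makes player $i$ of type $\alpha_i$: at an $H$-state, if $i$ switches to $L$, then opponent $3-i$ also switches to $L$ precisely when $v_{3-i}=1$, an event of probability $\alpha_i$. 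Taking $\Omega'=\Omega$, TE1--TE3 are immediate; for TE4, rationality of $H$ at an $H$-state follows directly from Proposition \ref{prop:TD}, since the induced counterfactual beliefs coincide with those of a type-$(\alpha_i,\beta_{3-i},C)$ player.

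For the ``only if'' direction, suppose $M$ is a counterfactual structure in which player $i$ has type $\alpha_i$ and the profile is a translucent equilibrium. If $\beta_1=\beta_2=0$ we are done; else, wlog $\beta_1>0$. Then by TE1 some state $\omega\in\Omega'$ has $\strat_1(\omega)=H$, and by TE3, player 1's marginal on $\strat_2$ is $\beta_2 H+(1-\beta_2)L$. The type-$\alpha_1$ assumption forces the counterfactual belief $\PR_{1,s_1^*}(\omega)$ (for any $s_1^*\neq H$) to assign probability $(1-\alpha_1)\beta_2$ to player 2 playing $H$ and the remainder to $L$, which is exactly the belief structure of a type-$(\alpha_1,\beta_2,C)$ player from Section~\ref{sec:explanation}. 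Rationality of $H$ at $\omega$ (TE4) therefore reduces, via Proposition \ref{prop:TD}, to the stated bound on $b$ with $i=1$; the symmetric argument gives the bound with $i=2$.

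The main obstacle, in contrast to the Prisoner's Dilemma case, is ensuring that $L$ is rational at $L$-states in the ``if'' direction, since $L$ is not a dominant strategy in Traveler's Dilemma (against an opponent believed to play $H$ with high probability, a deviation to $H-1$ is tempting). The resolution is to exploit the freedom left by the type-$\alpha_i$ condition, which only constrains $f$ at $H$-states: at $L$-states one may modify $f$ so that any deviation from $L$ maps to a state in which the opponent also plays $L$, yielding the deviator at most $L-b<L$, and thus making $L$ a best response. This modification preserves CS1 and CS2 and does not disturb the type-$\alpha_i$ property verified at $H$-states, so TE4 holds throughout $\Omega'$.
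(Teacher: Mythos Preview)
Your approach is exactly what the paper intends---it explicitly says the proof is ``in a similar fashion'' to Proposition~\ref{pro:PDequilibrium} and leaves details to the reader---and you have supplied those details correctly, including the important observation (glossed over by the paper) that rationality of $L$ at $L$-states is not automatic in Traveler's Dilemma and must be secured by exploiting the freedom in $f$ at non-$H$-states.

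There is one genuine technical slip. Your state space $\Omega=\{L,H\}\times\{0,1\}^2$ (even after correcting the inherited typo to $\{L,H\}^2\times\{0,1\}^2$) is too small: in Traveler's Dilemma the strategy set is $\{L,L+1,\ldots,H\}$, and CS1 requires $\strat_i(f(\omega,i,x))=x$, so for each intermediate $x$ you need states with $\strat_i=x$. Both the PD-style $f$ at $H$-states (for deviations to $x\in\{L+1,\ldots,H-1\}$) and your modified $f$ at $L$-states depend on such states existing. The fix is routine: take $\Omega=S\times\{0,1\}^2$ with $S=\{L,\ldots,H\}^2$, and set $\Omega'=(\supp\sigma)\times\{0,1\}^2\subseteq\{L,H\}^2\times\{0,1\}^2$. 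Define $\PR_i$ arbitrarily (consistent with PR1--PR2) on the extra states; since TE1--TE4 only concern $\Omega'$, nothing else in your argument changes.
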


In the following propositions, let C and D denote, respectively, the
full contribution and the null contribution in the Public Goods
game. Given an $N$-tuple $(r_1,\ldots,r_N)$ of real numbers,
$\bar r_{-i}$ denotes the average of the numbers $r_j$, with $j\ne i$. 

\begin{proposition}\label{pro:PGGequilibrium1} 
$(\beta_1 C + (1-\beta_1)D,\ldots, \beta_N C + (1-\beta_N)D)$ is a
translucent equilibrium of the Public Goods game if and only if 
$\rho\bar\beta_{-i}(N-1)\ge1-\rho$ for all $i$, or $\beta_i=0$ for
all $i$.  \wbox  
\end{proposition}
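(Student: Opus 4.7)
The plan is to invoke Theorem \ref{thm:translucenteq} to reduce translucent equilibrium to coherence, and then directly verify coherence of the given mixed profile by computing expected payoffs. Write $\sigma = (\beta_1 C + (1-\beta_1)D, \ldots, \beta_N C + (1-\beta_N)D)$. The support of $\sigma_i$ contains $C$ iff $\beta_i > 0$ and contains $D$ iff $\beta_i < 1$, so coherence requires checking each of these pure strategies whenever it lies in the support.

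Because $u_i(x_1, \ldots, x_N) = 1 - x_i + \rho(x_1 + \ldots + x_N)$ is linear in the $x_j$'s, expected values propagate through. I would compute $u_i(C,\sigma_{-i}) = \rho + \rho(N-1)\bar\beta_{-i}$ and $u_i(D,\sigma_{-i}) = 1 + \rho(N-1)\bar\beta_{-i}$. For the coherence check, the worst-case opponent profile is $s_{-i}' = (0,\ldots,0)$, which gives $u_i(s_i', 0, \ldots, 0) = 1 - (1-\rho)s_i'$; as a function of $s_i' \in \{0, 0.01, \ldots, 1\}$ this is maximized at $s_i' = 0$, with value $1$. Hence: coherence at $D$ (when $\beta_i < 1$) always holds, while coherence at $C$ (when $\beta_i > 0$) holds iff $u_i(C,\sigma_{-i}) \ge 1$, i.e., iff $\rho(N-1)\bar\beta_{-i} \ge 1-\rho$.

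This reduces the problem to showing that coherence of $\sigma$ (equivalently: $\rho(N-1)\bar\beta_{-i} \ge 1-\rho$ for every $i$ with $\beta_i > 0$) is equivalent to the disjunction in the statement. The only nontrivial direction is: if some $\beta_j > 0$ and the inequality holds for every $i$ with $\beta_i > 0$, then it also holds for every $i$ with $\beta_i = 0$. The argument is a short arithmetic comparison: for such an $i$, one has $\sum_{k \ne i}\beta_k = \beta_j + \sum_{k \ne i, j}\beta_k \ge \beta_i + \sum_{k \ne i, j}\beta_k = \sum_{k \ne j}\beta_k$, so the inequality at $j$ implies the same inequality at $i$. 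The converse direction is immediate: if the inequality holds for all $i$ then a fortiori for those with $\beta_i > 0$, and the all-zero case is trivially coherent.

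The main obstacle is really just bookkeeping in this last equivalence step, since the ``vacuous'' players (those with $\beta_i = 0$) impose no coherence constraint directly yet must satisfy the inequality in the statement. Once the right monotonicity of the partial sums is identified, the argument is routine. No new ideas beyond Theorem \ref{thm:translucenteq} and the linearity of $u_i$ are required.
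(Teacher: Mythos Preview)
Your proof is correct and follows essentially the same approach as the paper, which leaves this proposition to the reader as ``proved in a similar fashion'' to Proposition~\ref{pro:PDequilibrium1}: reduce to coherence via Theorem~\ref{thm:translucenteq}, compute the expected payoffs using linearity, and identify the binding deviation. Your monotonicity observation---that if $\beta_i=0$ and $\beta_j>0$ then $\bar\beta_{-i}\ge\bar\beta_{-j}$, so the inequality at $j$ carries over to $i$---is exactly the bookkeeping detail needed to upgrade ``the inequality holds for all $i$ with $\beta_i>0$'' to ``for all $i$''; the analogous step in the paper's Prisoner's Dilemma argument is the simpler observation that $\beta_2 b\ge c>0$ forces $\beta_2>0$.
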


\begin{proposition}\label{pro:PGGequilibrium}
$(\beta_1 C + (1-\beta_1)D, \ldots,\beta_N C + (1-\beta_N)D)$ is a
  translucent equilibrium of the Public Goods game in a structure where player $i$ has type
  $\alpha_i$ 
if and only if $\beta_i=0$ for all $i$ or
$\alpha_i\rho\bar\beta_{-i}\ge1-\rho$ for all $i$.\wbox 
\end{proposition}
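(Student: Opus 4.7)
The plan is to adapt the proof of Proposition \ref{pro:PDequilibrium} to the $N$-player Public Goods game, invoking Proposition \ref{pro:PCG} in place of Proposition \ref{prop:PD}. For the ``if'' direction, I would build a counterfactual structure $M(\alpha_1,\ldots,\alpha_N)$ with state space $\Omega = \{C,D\}^N \times \{0,1\}^N$, exactly on the template used for Prisoner's Dilemma. The strategy function is projection, $\strat((s,v)) = s$; the closest-state function $f$ is defined so that if player $i$ deviates from $s_i$ at state $(s,v)$, then each $j\ne i$ with $v_j = 1$ switches to her Nash strategy $D$, while each $j\ne i$ with $v_j = 0$ keeps $s_j$; and the belief function is the product measure that makes each $j \ne i$ independently cooperate with probability $\beta_j$ on the first coordinate and have $v_j' = 1$ with probability $\alpha_i$ on the second. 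By construction this makes player $i$ of type $\alpha_i$. When $\beta_i = 0$ for all $i$, set $\Omega' = \{((D,\ldots,D),v) : v \in \{0,1\}^N\}$; rationality there reduces to the fact that $(D,\ldots,D)$ is the Nash equilibrium. Otherwise, take $\Omega' = \Omega$, and TE1--TE3 are immediate from the construction.

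The crux of verifying TE4 is the observation that the Public Goods payoff is linear in the number of cooperators. Hence, when player $i$ believes each other player $j$ independently cooperates with probability $\beta_j$, his expected payoff from any fixed strategy equals the expected payoff under the uniform belief that every other player cooperates with probability $\bar\beta_{-i}$. This lets me invoke Proposition \ref{pro:PCG} directly, with $\alpha = \alpha_i$ and $\beta = \bar\beta_{-i}$, to conclude that cooperation is a best response at a state where $i$ cooperates exactly when the stated inequality holds; and defection is always a best response since $D$ is $i$'s component of the Nash equilibrium. This yields TE4 and completes the ``if'' direction.

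For the ``only if'' direction, suppose the given mixed profile is a translucent equilibrium in some structure $M$ in which each player has type $\alpha_i$. If $\beta_i = 0$ for all $i$ there is nothing to prove; otherwise, pick any $i$ with $\beta_i > 0$. TE1 guarantees a state $\omega \in \Omega'$ at which $i$ plays $C$; TE3 pins down $i$'s marginal beliefs about others' strategies to be $\sigma_{-i}$; and TE4 requires cooperation to be a best response for $i$ at $\omega$. The same linearity reduction then lets me apply Proposition \ref{pro:PCG} to extract the stated inequality for this $i$. The main obstacle I anticipate is propagating the bound across players: showing that a profile in which some $\beta_i$ are strictly positive while others vanish cannot survive as a translucent equilibrium unless the displayed inequality holds uniformly. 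I expect this to follow from a chaining argument using $\rho < 1$, since any cooperator $i$ forces $\bar\beta_{-i}$ to be bounded below by $(1-\rho)/(\alpha_i\rho)$, which in turn forces enough other players to cooperate with positive probability; iterating this observation rules out mixed profiles with partial support and reduces the analysis to the two regimes stated in the proposition.
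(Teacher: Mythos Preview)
Your approach—mirroring the proof of Proposition~\ref{pro:PDequilibrium} with Proposition~\ref{pro:PCG} in the role of Proposition~\ref{prop:PD}, and using linearity of the Public Goods payoff to reduce heterogeneous beliefs $(\beta_j)_{j\ne i}$ to the single parameter $\bar\beta_{-i}$—is exactly what the paper intends; it states only that these propositions are proved ``in a similar fashion'' and leaves details to the reader. The linearity reduction is the right idea and carries the expected-utility comparison through cleanly.

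One small correction to the ``if'' direction: taking $\Omega' = \Omega$ in the non-degenerate case can violate TE1, because the displayed inequalities do not force every $\beta_i$ to be positive. Take instead $\Omega' = \supp(\sigma)\times\{0,1\}^N$; TE2 survives since $\PR_i(\omega)$ is already supported on profiles in $\supp(\sigma_{-i})$, and TE4 at defecting states is automatic.

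Your worry about the ``only if'' direction is well placed, but the chaining argument you sketch will not close it. For any $i$ with $\beta_i > 0$, TE4 at a cooperating state together with the type-$\alpha_i$ constraint and linearity yields the inequality for that $i$ directly. For a player $k$ with $\beta_k = 0$, however, nothing follows: $k$ only plays $D$ on $\Omega'$, $D$ is always rational, and the type constraint is vacuous at states where $k$ does not cooperate. Concretely, with $N=3$, $\rho = 0.9$, $\alpha_1=\alpha_2=1$, $\alpha_3 = 0$, $\beta_1=\beta_2=1$, $\beta_3=0$, the profile $(C,C,D)$ is a translucent equilibrium in a structure of the required type, yet player $3$'s inequality reads $0 \ge 1-\rho$ and fails. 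The two-player argument in Proposition~\ref{pro:PDequilibrium} succeeds only because a lone cooperator there faces $\beta_{3-i}=0$ and so violates her own bound; for $N\ge 3$, two cooperators can sustain each other while a third defects. The honest fix is to require the inequality only for those $i$ with $\beta_i>0$; under that reading both directions go through exactly as you outline, and no chaining is needed.
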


\begin{proposition}\label{pro:BCequilibrium} 
$(\beta_1 H + (1-\beta_1)L,\ldots, \beta_N H + (1-\beta_N)L)$ is a
translucent equilibrium of the Bertrand competition if and only if
$\beta_i= 0$ for all $i$, or 
$\prod_{j\ne i}\beta_j\ge\frac{L}{H}$ for all $i$.  \wbox
\end{proposition}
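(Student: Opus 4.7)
The plan is to apply Theorem~\ref{thm:translucenteq}, reducing the question to characterising when the mixed profile $\sigma$ with $\sigma_i = \beta_i H + (1-\beta_i) L$ is coherent. First I would compute the two expected payoffs that enter the coherence inequalities. If player $i$ plays $H$, he earns a positive payoff only in the event that every other player also plays $H$, which under $\sigma_{-i}$ has probability $\prod_{j\neq i}\beta_j$, and in that event he ties with all others and receives $H/N$; hence $u_i(H,\sigma_{-i}) = (\prod_{j\neq i}\beta_j)(H/N)$. If player $i$ plays $L$ and exactly $k$ others also play $L$, he splits $L$ among $k+1$ firms and gets $L/(k+1)\geq L/N$; hence $u_i(L,\sigma_{-i})\geq L/N$ unconditionally.

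Next I would check the coherence condition of Definition~\ref{dfn:coherent} one deviation at a time, for an arbitrary $s_i\in\supp(\sigma_i)$. For any deviation $s_i'\in\{L+1,\ldots,H\}$, one can choose $s_{-i}'$ in which some other player is assigned a price strictly below $s_i'$; then $u_i(s_i',s_{-i}')=0$, so the coherence inequality reduces to $u_i(s_i,\sigma_{-i})\geq 0$, which always holds. The only deviation that cannot be punished down to $0$ is $s_i'=L$: since $L$ is the price floor, no one can undercut, and the minimum of $u_i(L,s_{-i}')$ over $s_{-i}'$ is attained at $(L,\ldots,L)$, giving $L/N$. Consequently, coherence of $s_i\in\supp(\sigma_i)$ is equivalent to the single inequality $u_i(s_i,\sigma_{-i})\geq L/N$.

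Combining the two steps, coherence of $L$ (when $\beta_i<1$, i.e.\ $L\in\supp(\sigma_i)$) is automatic from $u_i(L,\sigma_{-i})\geq L/N$, while coherence of $H$ (when $\beta_i>0$, i.e.\ $H\in\supp(\sigma_i)$) reduces to $(\prod_{j\neq i}\beta_j)(H/N)\geq L/N$, i.e.\ $\prod_{j\neq i}\beta_j\geq L/H$. If $\beta_i=0$ for all $i$, there is no $H$-constraint to check and $\sigma$ is the Nash profile in which everyone plays $L$, which is a translucent equilibrium by Proposition~\ref{prop:nashistranslucent}. If some but not all $\beta_i$ vanish, then for any $i$ with $\beta_i>0$ the product $\prod_{j\neq i}\beta_j$ contains a zero factor, so the required inequality $\prod_{j\neq i}\beta_j\geq L/H>0$ fails; by Theorem~\ref{thm:translucenteq} this rules out translucent equilibrium. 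If every $\beta_i>0$, coherence of $H$ at each player yields exactly the inequality $\prod_{j\neq i}\beta_j\geq L/H$ for all $i$, as claimed.

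The only delicate point is identifying the worst-case responder $s_{-i}'$ for each deviation $s_i'$; the observation that $L$ cannot be undercut is what makes the binding lower bound $L/N$ rather than $0$, and once that is in hand the rest is routine case analysis on which $\beta_i$ vanish.
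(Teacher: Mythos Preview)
Your proposal is correct and follows exactly the approach the paper intends: the paper does not write out a proof for this proposition but refers the reader to the pattern of Proposition~\ref{pro:PDequilibrium1}, which also applies Theorem~\ref{thm:translucenteq} and reduces to the coherence inequality $u_i(s_i,\sigma_{-i})\ge \min_{s_{-i}'} u_i(s_i',s_{-i}')$. Your identification of $L/N$ as the binding minimax value (because the floor $L$ cannot be undercut, whereas every higher price can be punished to payoff $0$) is precisely the game-specific step needed, and your case split on which $\beta_i$ vanish matches the structure of the Prisoner's Dilemma argument in the paper.
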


\commentout{
\begin{proposition}\label{pro:BCequilibrium1}
$(\beta_1 H + (1-\beta_1)L, \ldots,\beta_N H + (1-\beta_N)L)$ is a
  translucent equilibrium of the Bertrand competition in a structure where player $i$ has type
  $\alpha_i$ 
if and only if $\beta_i = 0$ for all $i$, or $\prod_{j\ne
  i}\beta_j\ge\frac{LN}{H((1-\gamma_i)(N-1)+1)}$ for all $i$, where
$\gamma_i = (1-\alpha)\bar\beta_{-i}$.\wbox 
\end{proposition}
}
\begin{proposition}\label{pro:BCequilibrium1}
$(\beta_1 H + (1-\beta_1)L, \ldots,\beta_N H + (1-\beta_N)L)$ is a
  translucent equilibrium of the Bertrand competition in a structure where player $i$ has type
  $\alpha_i$ if and only if $\beta_i = 0$ for all $i$, or $\prod_{j\ne
  i}\beta_j\ge f(\gamma_{i,j},N)LN/H$ for all $i$, where
  $f(\gamma_{i,j},N)=\sum_{J\subseteq P - \{i\}}(\prod_{j\in
    P- (J \cup \{i\})}\gamma_{i,j}\prod_{j\in
    J}(1-\gamma_{i,j}))/(|J|+1)$ and 
$\gamma_{i,j} = (1-\alpha_i)\beta_{j}$.\wbox 
\end{proposition}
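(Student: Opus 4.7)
The plan is to mirror the proof of Proposition~\ref{pro:PDequilibrium}, generalizing the two-player Prisoner's Dilemma construction to $N$-player Bertrand competition and invoking the expected-utility computation underlying Proposition~\ref{pro:Bertrand} in an asymmetric form that allows both $\alpha_i$ and the $\beta_j$ to vary across players.

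For the ``if'' direction, I would construct an explicit counterfactual structure $M(\alpha_1,\ldots,\alpha_N)$ with $\Omega = \{H,L\}^N \times \{0,1\}^N$, where the second component records which players detect a deviation. I would define $\strat$, $f$, and $\PR_i$ exactly as in the proof of Proposition~\ref{pro:PDequilibrium}, except that the marginal of $\PR_i(s,v)$ on $S_{-i}$ is $\prod_{j \ne i} \sigma_j$ with $\sigma_j = \beta_j H + (1-\beta_j)L$, and the marginal on $v_{-i}$ is a product of independent Bernoulli$(\alpha_i)$ factors. A routine check shows $M(\alpha_1,\ldots,\alpha_N)$ is a finite counterfactual structure in which player $i$ has type $\alpha_i$. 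Taking $\Omega'$ to be all of $\Omega$ (or the set of states of the form $((L,\ldots,L),v)$ in the degenerate case $\beta_1=\cdots=\beta_N=0$), TE1--TE3 are immediate from the definitions.

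The crux is TE4, which I would establish by adapting the expected-utility calculation in Proposition~\ref{pro:Bertrand} to the asymmetric setting. If $i$ intends to play $H$, his expected payoff is $(H/N)\prod_{j\ne i}\beta_j$, since he receives $H/N$ only when every other player also plays $H$. If he deviates to $L$, his belief that $j$ plays $H$ drops from $\beta_j$ to $\gamma_{i,j}=(1-\alpha_i)\beta_j$; summing over subsets $J$ of other players who end up playing $L$ gives expected payoff $L\cdot f(\gamma_{i,j},N)$. A deviation to $H-1$ yields $(H-1)\prod_{j\ne i}\gamma_{i,j}$, which is dominated by the cooperation payoff by the same observation used in the proof of Proposition~\ref{pro:Bertrand}; intermediate deviations are dominated by one of $L$ or $H-1$. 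Thus cooperation is a best response at every state in $\Omega'$ precisely when the stated inequality holds, which is exactly what TE4 requires.

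For the ``only if'' direction, assume the profile is a translucent equilibrium in some $M$ in which player $i$ has type $\alpha_i$, and that it is not the case that $\beta_i=0$ for all $i$. Pick any $i$ with $\beta_i>0$; by TE1 there is $\omega\in\Omega'$ with $\strat_i(\omega)=H$. Using TE3 together with the type-$\alpha_i$ constraint on $\PR_i$, I can read off that $i$'s expected utility from cooperating at $\omega$ is $(H/N)\prod_{j\ne i}\beta_j$ and that his expected utility from deviating to $L$ is $L\cdot f(\gamma_{i,j},N)$, independent of how $M$ is otherwise structured. Rationality at $\omega$ then forces $\prod_{j\ne i}\beta_j \ge f(\gamma_{i,j},N)LN/H$. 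The main obstacle is the combinatorial bookkeeping in this asymmetric expected-utility expression: passing from the binomial sum over $k$ in Proposition~\ref{pro:Bertrand} to a sum over subsets $J$ with weights $\prod_{j\in P-(J\cup\{i\})}\gamma_{i,j}\prod_{j\in J}(1-\gamma_{i,j})/(|J|+1)$ requires carefully tracking both the random strategy profile drawn from $\prod_j\sigma_j$ and the independent detection vector drawn at rate $\alpha_i$, then grouping together those states that collapse to the same post-deviation profile under $f$. Once this bookkeeping is in place, the argument parallels Proposition~\ref{pro:PDequilibrium}.
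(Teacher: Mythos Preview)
Your approach is exactly what the paper intends: it states only that these propositions ``can be proved in a similar fashion'' to Proposition~\ref{pro:PDequilibrium} and leaves details to the reader, and you are carrying out precisely that plan, with the asymmetric version of the computation in Proposition~\ref{pro:Bertrand} supplying TE4.

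There is one technical slip in your construction. In Bertrand competition the strategy set $S_i$ is $\{L,L+1,\ldots,H\}$, not $\{H,L\}$. Since the closest-state function must satisfy CS1, namely $\mathbf{s}_i(f(\omega,i,s_i'))=s_i'$ for every $s_i'\in S_i$, the state space $\Omega$ must contain states in which player $i$ plays $H-1$ (and every other intermediate price), or else $f(\omega,i,H-1)$ is undefined. So take $\Omega=S\times\{0,1\}^N$ with $S=\{L,\ldots,H\}^N$, and let $\Omega'=\{H,L\}^N\times\{0,1\}^N$; TE1--TE3 then go through as you describe, and your TE4 argument already handles deviations to all of $S_i$. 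A second small point: in the ``only if'' direction you derive the inequality only for those $i$ with $\beta_i>0$, but the statement requires it for all $i$. This gap closes itself once you note that the inequality you obtain for any single such $i$ forces $\prod_{j\ne i}\beta_j>0$, hence every $\beta_j>0$, after which your argument applies to every player.
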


\paragraph{Acknowledgments:}  We thank Krzysztof Apt for useful
comments on an earlier draft of this paper.  
Joseph Halpern was supported in part by NSF grants 
IIS-0911036 and  CCF-1214844, AFOSR grant FA9550-08-1-0438, ARO grant
W911NF-14-1-0017, and by the DoD 
Multidisciplinary University Research Initiative (MURI) program administered by AFOSR under grant FA9550-12-1-0040.  Valerio Capraro was funded by the Dutch Research
Organization (NWO) grant 612.001.352.




 \bibliographystyle{chicago}
\bibliography{joe}



%

\end{document}